\newtcolorbox{important}[1][]{
  colback = gray!6,              
  colframe = gray!40,            
  boxrule = 0.9pt,               
  arc = 3pt,                     
  left=10pt, right=10pt,
  top=8pt, bottom=8pt,
  enhanced,
  #1                             
}
\newtheorem{definition}{Definition}[section]
\newtheorem{theorem}[definition]{Theorem}
\newtheorem{lemma}[definition]{Lemma}
\newtheorem{remark}[definition]{Remark}
\author{A. Della Corte$^1$\footnote{Corresponding author; email to: alessandro.dellacorte@unicam.it.}\orcidlink{0000-0002-1782-0270}, M. Farotti$^2$\orcidlink{0009-0001-5000-2827}, L. Guglielmi$^{1,3}$\orcidlink{0009-0006-1624-049X}}
\date{
{\normalsize
$^1\;$\footnotesize{\textit{School of Science and Technology, University of Camerino,\\
via Madonna delle Carceri, 9, Camerino, I-62032, Italy}}\\
\vspace{3mm}
$^2\;$\footnotesize{\textit{Doctoral School in Computer Science and Mathematics, University of Camerino,\\ Via Madonna delle Carceri, 9, Camerino, 62032, Italy}}\\
\vspace{3mm}
$^3\;$\footnotesize{\textit{Istituto Nazionale di Fisica Nucleare, Sezione di Perugia,\\
via A.~Pascoli, I-06123 Perugia, Italy}}
}}
\title{A no-go theorem for  irreversibility along\\ single-branch collapse dynamics}
\begin{document}
\maketitle

\begin{abstract}
\noindent We study finite dimensional quantum systems with arbitrary collapse events, establishing, under no-information-erasure conditions, a structural no-go for operational irreversibility along single branches of the collapse dynamics. More precisely, we prove that, for every physically admissible selector of the collapse dynamics, there exists a topologically closed, forward-invariant subset of the projective state space on which any two states can be connected with arbitrarily fine Fubini–Study precision and arbitrarily small integrated energetic cost. This shows that the preservation of information along a realized branch of outcomes guarantees islands of quasi-reversibility, while genuine irreversibility requires additional ingredients such as non-compactness or information erasure.

\noindent KEYWORDS: Quantum collapse dynamics; Quasi-reversibility; Chain-recurrence; Information non-erasure.

\noindent MSC2020: 81P15; 81P45; 37B20; 03E10.

\end{abstract}

\tableofcontents

\section{Introduction}
\subsection{Reversibility in unitary versus collapse dynamics}
\noindent In unitary quantum dynamics of finite-dimensional systems, recurrence phenomena, either in the measurable or topological sense, arise from the continuity of unitary evolution. Indeed, since the Schr\"odinger dynamics is continuous (indeed isometric), in the compact case the classical recurrence theorems by Poincar\'e~\cite{poincare1890probleme,bocchieri1957quantum} and Birkhoff~\cite{birkoff1912quelques} ensure, respectively, that almost every orbit admits arbitrarily close returns in state space, and that quasi–periodic orbits exist. 
Therefore, within purely unitary evolution of compact systems, the dynamics admits no
intrinsic arrow of time. 

\noindent When wavefunction collapse is included in the picture, this structure breaks. For the discrete–time setting, a natural analogue of the previously described unitary evolution is obtained by following a single, consistent branch of the dynamics in an open system, an approach that has been introduced for the first time (within quantum optics) in \cite{dalibard1992wave}, and subsequently elaborated in many other works (for instance \cite{carmichael1993open} and \cite{wiseman2009quantum}, the latter also discussing single-branch dynamics from an epistemological point of view). Consider a finite-dimensional space $\mathbb{P}(\mathcal{H})$ of pure states, a single observable $A$ with finite spectrum, and a particular state $\psi$. At each time step either no collapse occurs (so one applies a unitary Schr\"{o}dinger evolution $\psi\mapsto U\psi$) or a projective collapse $\{P_i\}$ occurs. The future outcome itinerary that is realized by $\psi$, that we indicate by $\omega=\{i_0,i_1,\dots\}$ ($\{i_k\}_{k=0,\dots N}$ being symbolic tools representing the possible collapse outcomes as well as the ``blank", no-collapse case), has to be of course admissible from a physical point of view. In particular, each selected outcome must have a nonzero Born weight for the state immediately preceding that step. 
If we assign a selected outcome to every state, the evolution along the outcome trajectory provides a map $T:\mathbb{P}(\mathcal{H})\to\mathbb{P}(\mathcal{H})$ that is, in general, many–to–one and highly discontinuous, because there is no justification in assuming that close states will evolve into close collapsed states. Arbitrarily close states are allowed, in general, to collapse to eigenstates belonging to distinct eigenspaces of the chosen observable; this is not at all surprising
if one considers that even the \textit{same} pre–collapse state may be sent to different
post–collapse eigenstates at distinct collapse events. Moreover, the eigenspaces of the observable are not absorbing for $T$ unless $U$ preserves them (i.e.\ $[U,P_i]=0$); between collapses, the unitary evolution typically carries states out of these subspaces. Consequently, the hypotheses behind the Poincar\'e/Birkhoff recurrence theorems are violated on the observed system.\footnote{Of course if one re–includes apparatus and environment as a closed finite-dimensional composite, the global pure state again evolves unitarily on a compact projective space, and recurrence can be recovered.}

\noindent In the classical compact setting (where the Poincaré–Birkhoff recurrence theorems apply), one studies the long–time behavior of a \emph{single} map on a compact state space. 
We want to explore the closest possible analogue when generalizing unitary evolution of closed systems to collapse dynamics of open systems. 
We fix a selector $\omega$ of the skew-product dynamics and thus obtain a single branch on the projective state space. This choice cleanly separates the predictability issue (which outcome occurs) from the regularity issue (whether the induced map $T$ is continuous or has any regularity). Moreover, we take the selector in its most literal observational sense:
\begin{enumerate}
    \item every outcome itinerary has to be physically admissible and the histories seen by different states must be time–compatible;
    \item each selected outcome has to have positive Born weight at that step;
    \item finally, treating collapse as intrinsically unpredictable at the single-event level\footnote{As it is assumed in the standard von Neumann–Dirac formulation. Alternative interpretations, such as Bohmian mechanics or the Everett (many-worlds) interpretation, eliminate collapse entirely; the present work follows the conventional projective measurement framework.}, there is no basis to assume that an entire open neighborhood is shielded from any outcome with nonzero Born weight, which means that we have, in general, \emph{topological accessibility} of every outcome label arbitrarily close to every state.
\end{enumerate}
The seemingly intractable irregularity that such state–wise selectors permit is usually sidestepped, for instance by averaging over outcomes or retaining stochasticity, where continuity and stability are recovered in measure or in expectation rather than pointwise for a deterministic selector–induced map \cite{carmichael1993open,belavkin1992quantum,barchielli2009quantum,attal2006repeated, wiseman2009quantum}. Our approach is instead to explore what can be said, from the point of view of recurrence, assuming only 1. and 2. above, so in particular allowing 3. We fix thus a \emph{realization map} (the chosen selector of the dynamics) assuming only the (physics-dictated) properties 1-2 above, and allowing the density property \eqref{densityproperty}. In particular, we allow that the the induced evolution map $T$ be everywhere discontinuous. 

\subsection{The result}

Once a realization map $D$ has been chosen, and an ambient Hamiltonian $H$ is fixed, with corresponding unitary group $U$, they determine a
(very irregular) induced dynamical evolution law:
$$T_{U,D} : \mathbb{P}(\mathcal{H}) \longrightarrow \mathbb{P}(\mathcal{H}),$$
acting on unit vectors $[u] \in \mathbb{P}(\mathcal{H})$.

Landauer’s principle (see \cite{landauer1961irreversibility}) bounds the minimal heat cost of logically irreversible operations (such as erasure), and conversely permits, in principle, arbitrarily low dissipation for logically reversible transformations in the quasi–static limit. The choice of a selector for the dynamics can be interpreted as realizing the extreme, no-information-erasure limit of Landauer's principle, in the sense highlighted by C.H. Bennett's resolution of the Maxwell demon paradox \cite{bennett1982thermodynamics,bennett2003notes,hanson2018landauer}. In particular, we share the following position, expressed by Bennett (from \cite{bennett1982thermodynamics}): 

\noindent \emph{ ``[...] the essential irreversible step, which prevents the demon from breaking the second law, is not the making of a measurement (which in principle can be done reversibly) but rather the logically irreversible act of erasing the record of one measurement to make room for the next."} 

\noindent Our realization map, indeed, does not erase (or coarse-grain) outcome information: the entire outcome sequence is always available as a fixed record, and collapse consists solely in following one realized branch.

As a physical principle, however, Landauer's is permissive rather than generative: it does not assert that a given dynamics actually realizes quasi–reversibility, nor does it identify when it must occur. In the present work we show that, despite the absence of regularity assumptions on $T$, there exists a topologically closed, $T$–invariant, strongly chain–transitive subset of projective state space along which any two states can be connected with arbitrarily small Fubini–Study error and arbitrarily small integrated energetic cost (in the sense of Def.~\ref{def_cost_}). While Landauer’s principle says that, without erasure, quasi–reversibility is not thermodynamically forbidden, we prove that in finite dimension, an island of operational reversibility is \emph{topologically enforced} under no erasure.

\bigskip

\noindent Let us make this more precise. 
For each finite forward itinerary
$$\gamma = \big([u_0],[u_1],\dots,[u_n]\big),$$
we denote by
   $$\mathcal{E}_D(\gamma) \,\ge 0$$
its energetic cost, that is the energy needed to modify the ambient Hamiltonian in order to let the state $[u]$ evolve along $\gamma$ (this will be made precise in Def.~\ref{def_energ_cost}). 

Given two states $[u],[v] \in \mathbb{P}(\mathcal{H})$, we can define the
\emph{minimal forward cost} of going from $[u]$ to $[v]$ under the dynamical law
$T_{U,D}$ by
$$
   d_D\big([u] \to [v]\big)
   \;:=\;
   \inf\Big\{ \mathcal{E}_D(\gamma) \;:\;
        \gamma \text{ is an admissible finite forward itinerary
        from } [u] \text{ to } [v] \Big\}.
$$

\begin{definition}\label{arrow}
Let $T_{U,D}$ be a dynamical law and $B\subseteq \mathbb{P}(\mathcal{H})$ be a $T_{U,D}$-invariant subset of the state space. We say that the dynamical system $(B,T_{U,D})$ exhibits an
\emph{operational arrow of time} if there exist states
$[u],[v] \in \mathbb{P}(\mathcal{H})$ such that
\[
   d_D\big([u] \to [v]\big)
   \;<\;
   d_D\big([v] \to [u]\big),
\]
i.e.\ the minimal energetic effort required to perturb the ambient unitary dynamics so as to carry the
system from $[u]$ to $[v]$ in forward time is strictly smaller than the
minimal energetic effort required to realize the reverse change from
$[v]$ back to $[u]$.

Conversely, we say that
the operational arrow of time does not arise (or that $B$ is
\emph{operationally reversible}) if
\[
   d_D\big([u] \to [v]\big)
   \;=\;
   d_D\big([v] \to [u]\big)
   \;=\; 0
   \qquad
   \text{for all } [u],[v] \in B.
\]
\end{definition}

We can summarize our main result informally as follows. 

\begin{important}
\textbf{Main result (informal)} 

In the no–erasure regime, compactness enforces
quasi–reversibility: given any finite-dimensional quantum system and any realization of the collapse dynamics, there exists a nonempty, topologically closed and invariant (with respect to the realization-induced evolution map) subsystem $S$ that is operationally reversible, meaning that, in $S$, quasi-returns with arbitrarily small Fubini-Study error and arbitrarily small energetic cost are allowed between any pair of states.  
\end{important}
Genuine irreversibility, thus, requires additional ingredients: information loss/coarse–graining, non-compact limits, or coupling to reservoirs.
This establishes a structural no-go: under finite-dimensional, no-erasure conditions, collapse alone cannot generate an operational arrow of time along individual realized branches.

\subsection{Where a naive diagonal grid argument fails}
\noindent It may look tempting to try to get a (naive) heuristic argument proving our main result as follows: fix $\epsilon>0$, cover $\mathbb{P}(\mathcal{H})$ by finitely many $d_{FS}$–balls (“$\epsilon$–cells”), and note that along any long enough forward segment some cell must be visited twice; this yields an $\epsilon$–loop. Pushing $\epsilon\to 0$ seems to promise finer and finer quasi-loops. However, this argument is intrinsically \emph{scale–wise}: as $\epsilon$ shrinks, the repeating cell and the base point inside it drift with $\epsilon$. Of course, by compactness the chosen cells admit a convergent subsequence, but in the absence of continuity the induced dynamics at the limit point is decoupled from the finite $\epsilon$–loops: $T$ at the limit may bear no relation to the images seen along the approximants. Consequently, the grid argument does not deliver a chain–recurrent point, let alone a topologically closed, invariant, internally strongly chain–transitive subsystem (which, as we will see, is the desired result); nothing in this argument enforces \emph{nesting across scales} of the pseudo–orbits.

\noindent To make this remark more precise, fix a mesh $\epsilon_k=2^{-k}$ and choose refining covers $\mathcal{U}_{k+1}\prec\mathcal{U}_k$ by $d_{FS}$–balls of radius $<\epsilon_k$. 
At scale $k$, scan a long forward segment and pick the first cell $C_k\in\mathcal{U}_k$ that is visited twice; let $\Gamma_k$ be the loop between those visits. 
This yields strong $\epsilon_k$–loops, but the selected cells need not \emph{nest}: typically $C_{k+1}\nsubseteq C_k$. 
Even after extracting a nested subsequence $C_{k_r}\searrow\{x_\ast\}$, the discontinuity of $T$ decouples the local dynamics at $x_\ast$ from the images traced by the $\Gamma_{k_r}$, so neither $x_\ast\,\mathcal{SC}_{d_{FS}}\,x_\ast$ (where $\mathcal{SC}_{d_{FS}}$ is the strong chain-relation defined in Def.~\ref{strong_ch_}) nor an invariant, internally $\mathcal{SC}_{d_{FS}}$–transitive set is guaranteed. Summarizing, the grid-refining technique only shows that: 
$$\text{For every $\epsilon$, there is a cell revisited with $O(\epsilon)$ accuracy and energetic cost,}$$ 
while we search for the stronger property: 
$$\text{There is a point such that, for every $\epsilon$, it is revisited with $O(\epsilon)$ accuracy and energetic cost.}$$

\subsection{Constructive Result over Non-Constructive Existence}

At the opposite extreme of the naive attempt  described above, one could obtain an abstract pure existence result on the existence of a chain-transitive subsystem by using Zorn's lemma on topologically closed, invariant subsets ordered by inclusion. While this method \textit{does} provide an existence result for a closed, invariant and chain transitive subset of the state space, it acts as a non-constructive ``black box" that obscures the very operationality we seek to characterize. 

Our aim, instead, is to give a \textit{constructive} procedure, which is the only one, in our opinion, that works well with the operational character of the employed notion of reversibility. Therefore, not relying on the Axiom of Choice, we implement (in Section~\ref{sec_proof}) a transfinite argument that enforces both spatial nesting and dynamical coherence across scales. 
This procedure yields a strongly chain–recurrent point, and even a topologically closed $T$–invariant subset that is internally $\mathcal{SC}_{d_{FS}}$–transitive (see Sections 2 and 3 for the relevant definitions). Even more importantly, the two basic operations considered in the procedure, that is taking images of the map and passing to an accumulation point of an orbit, clearly have finite-precision analogues. The transfinite induction serves thus as a rigorous proxy for an arbitrary-precision search protocol, which Zorn's lemma (or in general arguments intersecting a descending chain of suitable subsets, even if relying on weaker forms of choice) cannot provide.

\bigskip

The paper is organized as follows: in Section 2 we introduce basic concepts and notation for the unitary evolution of the system and the abstract topological dynamics concepts that will be used later; in Section 3 we describe the selector of the collapse dynamics and the induced discrete-time evolution; in Section 4 we study reversibility for arbitrary (physically admissible) realization maps; Section~\ref{sec_proof} is devoted to the proof of Theorem \ref{th_rec_}; Section 6 draws some conclusions.

\section{Preliminaries}

\subsection{Notation and definitions for the unitary evolution}

Let $\mathcal{H} \cong \mathbb{C}^{n+1}$ be a finite-dimensional Hilbert
space (with $n \ge 0$).  \noindent
Note that $\dim_{\mathbb{C}}\mathcal{H} = n+1$, so the associated complex
projective space $\mathbb{P}(\mathcal{H}) \cong \mathbb{CP}^{n}$ has complex
dimension $n$.
Let
$\mathbb{P}(\mathcal{H})$ be the complex projective space of one-dimensional subspaces (pure states), equipped with the Fubini--Study
metric $d_{FS}$ (see Def.~\ref{def_FSmetric} below), which makes $\mathbb{P}(\mathcal{H})$ a compact
metric space. 

We recall the definition of the Fubini-Study metric
(for a more detailed discussion, see~\cite{bengtsson2017geometry}).
\begin{definition}[Fubini--Study metric]\label{def_FSmetric}
Let $\mathbb{CP}^n$ be the complex projective space of complex
dimension $n$ 
(note that $\mathbb{P}(\mathbb{C}^{n+1})$ is an equally standard notation).
A point in $\mathbb{CP}^n$ can be represented by homogeneous
coordinates $[z_0 : z_1 : \ldots : z_n]$, where
$(z_0, z_1, \ldots , z_n) \in \mathbb{C}^{n+1}\setminus\{0\}$ and
\[
[z_0 : z_1 : \ldots : z_n] = [\lambda z_0 : \lambda z_1 : \ldots : \lambda z_n]
\quad \text{for all } \lambda \in \mathbb{C}\setminus\{0\}.
\]

Let $\{\ket{e_k}\}_{k=0}^{n}$ be an orthonormal basis for $\mathcal{H}$, and
consider two state vectors
\[
\ket{\phi} = \sum_{k=0}^n z_k \ket{e_k},
\qquad
\ket{\psi} = \sum_{k=0}^n w_k \ket{e_k},
\]
representing the points $[\phi], [\psi] \in \mathbb{CP}^n$,
respectively.  
Then the \emph{Fubini--Study distance} between $[\phi]$ and $[\psi]$ is defined by
\begin{equation}\label{eq_FS_metric}
    d_{FS}([\phi],[\psi]) 
    = \arccos\!\left(
        \sqrt{\frac{|\braket{\phi}{\psi}|^2}
        {\braket{\phi}{\phi}\,\braket{\psi}{\psi}}}
      \right).
\end{equation}
Equivalently, the Fubini--Study metric is the Kähler metric whose associated
$(1,1)$-form is
\[
\omega_{FS} = \frac{i}{2}\,\partial \bar{\partial} \log \|z\|^2,
\qquad
\text{where } \|z\|^2 = \sum_{j=0}^{n} |z_j|^2.
\]
\end{definition}

Denote by $\mathrm{U}(\mathcal{H})$ the set of all linear unitary operators from $\mathcal{H}$ to itself. Fix $U\in\mathrm{U}(\mathcal{H})$  and a finite collection of orthogonal projections $\{P_j\}_{j\in\Lambda}$ on $\mathcal{H}$, indexed by a finite alphabet $$\Lambda=\{0,1,\dots,m\}.$$ 
We do \emph{not} assume any property for the projections $P_j$ beyond, linearity, orthogonality and idempotence. We interpret these projections $\{P_j\}_{j\in\Lambda}$ as the spectral projections of a single self-adjoint observable $A$ on $\mathcal{H}$: there exist distinct eigenvalues $\{\lambda_j\}_{j=1}^{m}$ such that
\[
A=\sum_{j=1}^{m}\lambda_j P_j,\qquad P_jP_k=\delta_{jk}P_j,\qquad \sum_{j=1}^{m}P_j=I,
\]
where $P_j$ projects onto the (possibly degenerate) eigenspace $E_{\lambda_j}=\operatorname{im}P_j$.
The index $0$ is reserved for the ``blank'' (no-collapse) channel, and therefore we set $P_0:=I$; it is not part of the spectral resolution of $A$.

\noindent The unitary operator $U$ represents the ``ambient" free continuous evolution between collapse events.  
Let $U(t_1,t_0)$ denote the unitary evolution through $U$ from time $t_0$ to time $t_1$.
If the Hamiltonian $H$ is time-independent, then
$$
U(t_1,t_0) = e^{-iH(t_1-t_0)},
$$
where we use the normal units: $\hbar=1.$
If $H=H(t)$ depends on time, the evolution is given by the time-ordered exponential
$$
U(t_1,t_0) = \mathcal{T}\exp\left(-i\!\int_{t_0}^{t_1} H(s)\,ds\right),
$$
which satisfies $$U(t_2,t_1)U(t_1,t_0)=U(t_2,t_0)\text{ for $t_0<t_1<t_2,$}$$  $$U(t_1,t_0)^\dagger U(t_1,t_0)=I,$$ $$U(t,t)=I,$$ where $I$ is the identity operator.
In the following, when we have $t_0=0$, we use the notations $U(t)\coloneqq U(t,0)$ and also set $U\coloneqq U(1)=U(1,0)$.

A single collapse event with outcome index
$j\in\Lambda$ acts (where defined) by
\[
[u]\mapsto f_j([u]) \coloneqq \frac{P_j U u}{\lVert P_j U u\rVert },
\]
where $u\in\mathcal{H}$ is any unit vector that represents the point
$[u]\in \mathbb{P}(\mathcal{H})$, and the expression is defined when $\lVert P_j U u\rVert\ne 0$.

\subsection{Notation and definitions for Topological Dynamics formalism}
In this Section we provide the topological dynamical concepts that will be used to prove our main result.
We say that $(X,f)$ is a topological dynamical system if $X$ is a topological space with a metric $d:X^2\to\mathbb{R}_0^+$  compatible with the topology on $X$ and $f:X\to X$ a map.
By $f^n(x)$ with $x\in X$ and $n\in\mathbb{N}$ we mean the $n$-times composition $f\circ f\circ \ldots \circ f(x)$. When we say that $(X,f)$ is a \emph{compact dynamical system}if  we require, in addition, that $X$ is a compact metric space.

Let us recall the notion of $\epsilon$-{\em chain} or $\epsilon$-{\em pseudo-orbit} \cite[p. 48]{kurka2003topological}.

\begin{definition}
Given two points $x,y\in X$ and $\epsilon>0$, an $\epsilon$-\emph{chain} (also called $\epsilon$-pseudo-orbit) from $x$ to $y$ is a finite set of points $x_0,x_1,\ldots,x_n$ in $X$, with $n\ge 1$, such that
\begin{enumerate}
\item[i)] $x_0=x$ and $x_n=y$,
\item[ii)] $d(f(x_i),x_{i+1})<\epsilon$ for every $i=0,1,\ldots n-1$.
\end{enumerate}
\end{definition}

The {\it chain relation} $\mathcal{C}\subseteq X^2$ is the binary relation defined as follows: given $x,y\in X$, 
\[
x\,\mathcal{C}\, y\iff \forall \epsilon>0\text{ there exists an $\epsilon$-chain from $x$ to $y$}.
\]
Let $\mathcal{A}\subseteq X^2$ be a binary relation on $X$. For $N\subseteq X$ and $y\in X$, we write $N\!\mathcal{A}\, y$ if $x\,\mathcal{A}\, y$ for every $x\in N$. 
Let $CR(X,f)=\{x\in X : x\,\mathcal{C}\,x\}$ be the set of all the chain-recurrent points of $(X,f)$ (we write simply $CR_f$ when the space $X$ is clear from the context).

Recall the concept of strong chain-recurrence originally introduced by Easton \cite{easton2006chain}.

\begin{definition}\label{strong_ch_}
Given two points $x,y\in X$ and $\epsilon>0$, we say that a finite sequence of points $x_0,x_1,\ldots,x_n $ of $X$, with $n\in\mathbb{N}$, is a \emph{strong $(\epsilon,d)$-chain} from
$x$ to $y$ if 
\begin{enumerate}
    \item[i)] $x_0 = x$ and $x_n = y$,
    \item [ii)] $ \sum_{i=0}^{n-1} d(f(x_i),x_{i+1}) <\epsilon$.
\end{enumerate}
The \emph{strong chain relation} $\mathcal{SC}_d\subseteq X^2$ is the binary relation defined as follows: given $x, y \in X$,
\[
x\, \mathcal{SC}_d\, y \iff \forall \epsilon>0\text{ there exists a strong $(\epsilon,d)$-chain from $x$ to $y$}.
\]
Let $\mathcal{SCR}_d(f)=\{x\in X : x\,\mathcal{SC}_d\,x\}$ be the set of all the strong chain-recurrent points of $(X,f)$ (we write simply $\mathcal{SCR}_d$ when the map $f$ is clear from the context).
\end{definition}

\begin{definition}\label{def_strong_trans}
    We say that $S\subseteq X$ is \emph{$\mathcal{SC}_d$-transitive} if for every $x,y\in S$ we have $x\,\mathcal{SC}_d\, y$. 
    If, in addition, for every $\epsilon>0$, there exists a strong $(\epsilon,d)$-chain from $x$ to $y$ whose points belong to $S$ we say that $S$ is \emph{internally $\mathcal{SC}_d$-transitive}. 
\end{definition}

\begin{definition}
The \emph{derived set} $S'$ of a subset $S\subseteq X$ is the set of the limit points $x$ of $S$, that is, the set of points $x\in X$ such that for every neighborhood $U$ of $x$ we have $S\cap (U\setminus\{x\})\neq \emptyset$.
\end{definition}

\vspace{0.5cm}

Throughout the paper we use the standard Landau notation $O(\epsilon^k)$ to denote
terms whose norm is bounded by $C\,|\epsilon|^k$ for some constant $C>0$
independent of $\epsilon$.

\section{Discrete-time dynamics}

\begin{definition}
For each $j\in\Lambda$, set
\[
D_j := \{ [u]\in \mathbb{P}(\mathcal{H}):\ \lVert P_j U u\rVert\ne 0 \}.
\]
Equivalently, $D_j$ is the set of pure states for which the outcome $j$ has strictly positive Born probability after applying $U$.
Notice that $D_0=\mathbb{P}(\mathcal{H})$.
\end{definition}

For each $j\in\Lambda$ the map $f_j:D_j\to \mathbb{P}(\mathcal{H})$ defined by
$f_j([u])=\dfrac{P_j U u}{\lVert P_j U u\rVert}$ is continuous on $D_j$.

Let $\Omega=\Lambda^{\mathbb{N}_0}=\{(w_n)_{n\in\mathbb{N}_0} : w_n\in \Lambda \text{ for all } n\in\mathbb{N}_0\}$ be the one-sided full shift with the
product topology (equivalently the metric $d_\Omega(\omega,\omega')=2^{-N}$
where $N$ is the largest positive integer such that $\omega_n=\omega'_n$ for
all $n<N$).  Denote the (continuous) left shift by
$\sigma:\Omega\to\Omega$.  Equip the product $S:=\Omega\times \mathbb{P}(\mathcal{H})$ with
the product metric
\[
d_S\bigl((\omega,[u]),(\omega',[u'])\bigr) = d_\Omega(\omega,\omega') + d_{FS}([u],[u']).
\]
Set
$$
\mathcal{D} = \{ (\omega,[u]) \in S : [u] \in D_{\omega_0} \}.
$$
\begin{definition}
Define the skew--product map $F:\mathcal{D}\to S$ by
\[
F(\omega,[u]) = (\sigma\omega, f_{\omega_0}([u])).
\]
Here $\omega_0$ denotes the zeroth coordinate of $\omega$ (the symbol
corresponding to time $0$).
\end{definition}

\noindent Note that, in general, $F$ is not a map from all of $\Omega\times \mathbb{P}(\mathcal{H})$ to itself; and it is continuous because each fibre map $f_j$ is continuous on $D_j$ and $\sigma$ is continuous.

\begin{definition}
A \emph{selector} is any function
\[
D:\mathbb{P}(\mathcal{H})\to\Omega,\qquad [u]\mapsto D([u])=(D([u])_n)_{n\in\mathbb{N}_0}.
\]
No regularity is assumed a priori: $D$ may be completely arbitrary. 
We will come back to the physical motivation for considering such general maps.
We denote by $\mathfrak{D}$ the set of all possible selectors.
\end{definition}

\begin{definition}
Let 
$$
T:U(\mathcal{H})\times \mathfrak{D} \to \mathbb{P}(\mathcal{H})^{\mathbb{P}(\mathcal{H})}
$$ 
be the map defined by
\[
(U,D)\mapsto T(U,D): \mathbb{P}(\mathcal{H})\to \mathbb{P}(\mathcal{H}),
\]
where the induced map is defined as
\[
T(U,D)([u]):= f_{D([u])_0}([u]),
\]
whenever the right-hand side is well defined.

We may denote the induced map $T(U,D)$ by $T_{U,D}$, or simply $T$ when the unitary operator $U$ and the selector $D$ are clear from the context.
\end{definition}

Under such a general definition, it is clear that not every selector makes physical sense. In the following, we will see some necessary criteria that a selector map has to meet to be considered physically meaningful.

First of all, we have to ensure that the stories of collapse events seen by distinct states are consistent with each other. This is accomplished by the following requirement.

\begin{definition}
A selector $D:\mathbb{P}(\mathcal{H}) \to\Omega$ is said to be \emph{compatible} with the
skew--product dynamics if, for every $[u]\in \mathbb{P}(\mathcal{H})$ for which $T([u])$ is defined, 
the following identity holds:
\begin{equation}\label{eq:compatibility}
D\bigl(T([u])\bigr) = \sigma\bigl(D([u])\bigr).
\end{equation}
\end{definition}
Moreover, we have to ensure that collapse events at each point produce observable outcomes only having positive probability.

This motivates the following request.

\begin{definition}
A selector $D:\mathbb{P}(\mathcal{H})\to\Omega$ is called \emph{admissible} if $D([u])_0\ne j \text{ whenever } [u] \notin D_j \text{ (or equivalently $U u\in\ker(P_j)$)}$.
 
Then the map $T([u])=f_{D([u])_0}([u])$ is well defined for every $[u]\in \mathbb{P}(\mathcal{H})$.
\end{definition}

\begin{definition}
A \emph{realization map} is a selector that is compatible and admissible. 
\end{definition}

Let us observe that, in a general open quantum system, none of the measurement channels (including the ``blank''
no-collapse case) can be regarded as strictly forbidden in any open
region of the state space.  
More precisely, every outcome label $j\in\Lambda$ must
remain topologically accessible in arbitrarily small neighborhoods of
every pure state.  Therefore, each outcome label $j\in\Lambda$ occurs arbitrarily close
to every pure state, which leads to the topological condition
\begin{equation}\label{densityproperty}
\overline{\{ [u] \in \mathbb{P}(\mathcal{H}) : D([u])_0=j\}} = \mathbb{P}(\mathcal{H})
\qquad \text{for all } j\in\Lambda.
\end{equation}
Note that we are not assuming \eqref{densityproperty}, but we stress that we are \emph{allowing} it.  Note that, under this topological condition, the map $T_{U,D}$ is discontinuous everywhere.

\begin{lemma}\label{lem:open-dense-Dj}
If $P_j$ is not the zero operator for $j\neq 0$, the sets
$$
D_j=\{[u]\in\mathbb{P}(\mathcal{H}):\ \|P_j U u\|\neq 0\}
$$
are open for all $j\in\Lambda$, and $D_j$ is dense for every $j\neq 0$. In particular $D_0=\mathbb{P}(\mathcal{H})$.
\end{lemma}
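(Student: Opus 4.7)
The plan is to treat the three claims in sequence, since each is really a separate observation.

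First I would dispose of the statement $D_0=\mathbb{P}(\mathcal{H})$: by the convention $P_0:=I$ set earlier, and the fact that $U$ is unitary, for any unit representative $u$ one has $\lVert P_0 U u\rVert=\lVert Uu\rVert=\lVert u\rVert=1\ne 0$. So every pure state lies in $D_0$.

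Next, for openness of $D_j$ for each $j\in\Lambda$, I would introduce the auxiliary function
\[
g_j:\mathbb{P}(\mathcal{H})\to\mathbb{R},\qquad g_j([u]):=\frac{\lVert P_j U u\rVert^{2}}{\lVert u\rVert^{2}},
\]
which is well defined on projective space (the right-hand side is invariant under $u\mapsto \lambda u$) and continuous because $P_j$ and $U$ are bounded linear operators and the quotient by $\lVert u\rVert^{2}$ is continuous away from $0$. Then $D_j=g_j^{-1}\bigl((0,+\infty)\bigr)$, which is open as the preimage of an open set under a continuous map. This gives the first conclusion.

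Finally, for density of $D_j$ when $P_j\ne 0$ and $j\ne 0$, I would argue by identifying the complement with a projectivized linear subspace. Explicitly,
\[
\mathbb{P}(\mathcal{H})\setminus D_j=\{[u]\in\mathbb{P}(\mathcal{H}):\ P_j U u=0\}=\{[u]:\ U u\in\ker P_j\}=\mathbb{P}\bigl(U^{-1}(\ker P_j)\bigr),
\]
the last equality holding because $U$ is a linear bijection of $\mathcal{H}$, so $U^{-1}(\ker P_j)$ is a linear subspace. Since $P_j$ is nonzero its image is nontrivial, hence $\ker P_j\subsetneq\mathcal{H}$ is a proper subspace, and so is $U^{-1}(\ker P_j)$. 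Thus $\mathbb{P}(\mathcal{H})\setminus D_j$ is the projectivization of a proper linear subspace of $\mathcal{H}$, which is a closed submanifold of strictly lower complex dimension in $\mathbb{CP}^{n}$, hence nowhere dense. Therefore $D_j$ is open and dense.

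There is no serious obstacle here; the only point that requires a line of care is the well-definedness of $g_j$ (i.e.\ its invariance under the projective equivalence relation) and the observation that openness in $\mathbb{P}(\mathcal{H})$ is just openness of the saturation in $\mathcal{H}\setminus\{0\}$ under the quotient topology, which is automatic once $g_j$ is written in the scale-invariant form above.
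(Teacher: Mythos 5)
Your proof is correct and follows essentially the same route as the paper's: openness via continuity of $[u]\mapsto\|P_jUu\|$, density by identifying the complement with the projectivization of the proper subspace $\ker(P_jU)=U^{-1}(\ker P_j)$, and $D_0=\mathbb{P}(\mathcal{H})$ from $P_0=I$. Your added remarks on scale-invariance and the quotient topology are careful but not a substantive departure.
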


\begin{proof}
Openness follows from continuity of $[u]\mapsto \|P_j Uu\|$. If $j\neq 0$, then $P_jU\neq 0$, so $\ker(P_jU)$ is a proper subspace of $\mathcal{H}$. 

Consequently, the projectivization $\mathbb{P}(\ker(P_j U))$ is a proper (closed) projective subspace of $\mathbb{P}(\mathcal{H})$, hence it has empty interior. Therefore, $$D_j=\mathbb{P}(\mathcal{H}\setminus \mathbb{P}(\ker(P_jU))$$ is dense. Finally, $P_0=I$ gives $D_0=\mathbb{P}(\mathcal{H})$.
\end{proof}

\bigskip

From now on we fix a realization map \(D:\mathbb{P}(\mathcal{H})\to\Omega\) and read it as an \emph{outcome itinerary}: at step \(n\), i.e. if a measurement of the fixed PVM \(\{P_j\}_{j\in\Lambda}\) is performed, the symbol we use is \(D([u])_n\). We analyze the dynamics conditional on this single branch. 

\begin{remark}
Note that the existence of a realization map does not conflict with the Kochen--Specker (KS) no-go theorem \cite{kochen2011problem}. In fact, it addresses a completely different point. KS forbids noncontextual, dispersion-free value assignments to \emph{all} projections simultaneously (in dimension \(\ge 3\)), preserving functional relations across incompatible observables. 
By contrast, throughout we fix a \emph{single} measurement context \(\{P_j\}_{j\in\Lambda}\) (one PVM), and the realization map \(D:\mathbb{P}(\mathcal{H})\to\Omega\) assigns an outcome sequence \emph{only for this context}. 
The assignment is explicitly contextual/history-dependent and makes no claims about outcomes for other, incompatible PVMs. Hence \(D\) is not a global valuation on the projection lattice, and the hypotheses of KS are not met.
\end{remark}

\noindent Summarizing, the map $T_{U,D}$ behaves as follows: if
$D([u])=\omega_0\omega_1\omega_2\dots\in\Omega$ then $$T([u])=f_{\omega_0}([u]),$$
provided that $[u] \in D_{\omega_0}$ so that the expression makes sense.
Thus, $T$ is obtained by applying to $[Uu]$ the fiber map corresponding to
the first symbol in its realization sequence. Notice that $T$ is defined on the whole space $\mathbb{P}(\mathcal{H})$ if and only if the selector $D$ is admissible.

\begin{remark}
   We make no non-degeneracy assumption on eigenspaces. 
   However, even if all nonblank projections $P_j$ ($j\neq 0$) have rank~1 (so each $f_j$ collapses to a single point of $\mathbb{P}(\mathcal{H})$), the presence of the ``blank'' outcome $0$ with $P_0=I$ interleaves unitary motion with discontinuous point–collapses. Under the density property \ref{densityproperty}, $D(\cdot)_0$ is nowhere locally constant, which forces $T$ to be nowhere continuous. Hence our chain–recurrence conclusion cannot be deduced trivially even in the rank–one collapse case.
\end{remark}

\section{Reversibility} 

\begin{definition}[Integrated energetic cost]\label{def_cost_}
Let $H$ be the ambient Hamiltonian and let a bounded self-adjoint
\emph{perturbation} $\Delta H(t)$ act on the purely unitary segments between
collapse events, so that the evolution obeys to: $$i\,\partial_t W(t)=(H+\Delta H(t))W(t),$$ for a suitable unitary operator $W$.
For a time interval $I\subset\mathbb{R}$, the \emph{integrated energetic cost} of
$\Delta H$ on $I$ is
\[
\mathcal{E}_I[\Delta H]\;:=\;\int_I \!\|\Delta H(t)\|_*\,dt,
\]
where $\|\cdot\|_*$ denotes any unitarily invariant matrix norm (by default, the
operator norm). 
\end{definition}

\begin{definition}\label{def_energ_cost}
Let $\gamma$ be a finite forward itinerary given by:
$$\gamma = \big([u_0],[u_1],\dots,[u_N]\big),
   \qquad [u_{k+1}] = T_{W_k(k+1,k),D}([u_k]),$$
where for every $k=0,\ldots,N-1$ let $\Delta H_k(t)$ be a bounded self-adjoint
\emph{perturbation} acting on the unitary segments $I_k:=[k,k+1]$ between
collapse events, so that the evolution obeys $i\,\partial_t W_k(t)=(H+\Delta H_k(t))W_k(t)$ for a suitable unitary operator $W_k$.
Then, the \emph{energetic cost} of the itinerary $\gamma$ is defined as:
$$\mathcal{E}_D(\gamma)=\sum_{k=0}^{N-1} \mathcal{E}_{I_k}[\Delta H_k], $$
where $\mathcal{E}_{I_k}[\Delta H_k]$ is the integrated energetic cost of $\Delta H_k$ on $I_k$.
\end{definition}

For $\epsilon>0$, we denote by $B_\epsilon([Uu])$ the open $d_{FS}$-ball of radius $\epsilon$ centered at $[Uu]$ in $\mathbb{P}(\mathcal{H})$.

Let us prove the following Lemma:
\begin{lemma}
\label{pert_2}
Let $u \in \mathcal{H}$ and $0<\tau_0\le \tau <\tau_1$.  
For every $\epsilon>0$ and $[v] \in B_\epsilon([U(\tau_1,\tau_0)u])$, there exist a rank-$2$ skew-Hermitian operator $K(v,\delta)$ with $\|K(v,\delta)\|=O(\epsilon)$ and a family of self-adjoint perturbations
\[
\Delta H_\epsilon(t)\;=\; U(t,\tau)\,\widetilde{H} \,U(t,\tau)^\dagger
\qquad t\in[\tau,\tau_1],
\]
with $\widetilde{H}=\frac{i}{\Delta\tau} \log \big( U(\tau_1,\tau)^\dagger\,e^{K(v,\delta)}\, U(\tau_1,\tau) \big)$  such that the time-ordered operator
\[
V(t,\tau)=\mathcal T\exp\!\Big(-i\int_{\tau}^{t} \big(H+\Delta H_\epsilon(s)\big)\,ds\Big)
\]
satisfies
\[
V(\tau_1,\tau)\,U(\tau,\tau_0) u = v,
\qquad\text{and}\qquad
\lim_{\epsilon\to 0} \| V(\tau_1,\tau) - U(\tau_1,\tau)\|=0.
\]
(Notice that no commutation relation between $H$ and $K(v,\delta)$ is assumed.)
\end{lemma}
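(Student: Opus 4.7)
The strategy is to reduce the Lemma to a Hilbert--space steering problem: if I can produce a unitary $R$ close to the identity that carries $w:=U(\tau_1,\tau_0)u$ to a suitable unit representative of $[v]$, then declaring $V(\tau_1,\tau):=R\,U(\tau_1,\tau)$ automatically gives $V(\tau_1,\tau)U(\tau,\tau_0)u=Rw=v$. To align the projective and Hilbert--space pictures I would first fix the lift of $[v]$: if $v_0$ is any unit representative and $\langle w,v_0\rangle=re^{i\phi}$ with $r\ge 0$, set $v:=e^{-i\phi}v_0$, so that $\langle w,v\rangle=r\in[0,1]$. Writing $\Delta\tau:=\tau_1-\tau$, the Fubini--Study identity gives $\arccos r=d_{FS}([v],[w])<\epsilon$, so the real angle $\theta$ between $v$ and $w$ in the complex 2-plane $\Pi:=\mathrm{span}_{\mathbb{C}}\{w,v\}$ satisfies $\theta<\epsilon$ and $\|v-w\|=2\sin(\theta/2)=O(\epsilon)$.

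Next I construct $K$ explicitly as a planar rotation generator on $\Pi$. Assuming $\theta>0$ (the case $\theta=0$ is trivial with $K=0$), let $w^{\perp}:=(v-rw)/\sin\theta\in\Pi$; then $\{w,w^{\perp}\}$ is an orthonormal basis of $\Pi$ and $v=\cos\theta\,w+\sin\theta\,w^{\perp}$. Setting
\[
K\;:=\;\theta\bigl(\ket{w^{\perp}}\!\bra{w}-\ket{w}\!\bra{w^{\perp}}\bigr)
\]
gives a rank--$2$ skew--Hermitian operator supported on $\Pi$ with $\|K\|=\theta=O(\epsilon)$; its exponential $R:=e^{K}$ acts as the identity on $\Pi^{\perp}$ and as a real rotation by angle $\theta$ in $\Pi$, so $Rw=v$ as required.

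Finally I would promote $R$ to a perturbed Schr\"odinger evolution on $[\tau,\tau_1]$ via the interaction picture. Define
\[
\widetilde H\;:=\;\frac{i}{\Delta\tau}\log\!\bigl(U(\tau_1,\tau)^{\dagger}\,e^{K}\,U(\tau_1,\tau)\bigr),
\]
using the principal branch of the matrix logarithm: for $\epsilon$ small enough $e^{K}$ lies in a neighbourhood of $I$ whose spectrum avoids $-1$, so the logarithm is well defined and skew--Hermitian (unitary conjugation preserves skew--Hermiticity), whence $\widetilde H$ is self--adjoint. With $\Delta H_\epsilon(t):=U(t,\tau)\widetilde H\,U(t,\tau)^{\dagger}$, the interaction--picture Hamiltonian $U(t,\tau)^{\dagger}\Delta H_\epsilon(t)\,U(t,\tau)=\widetilde H$ is time--independent, so the interaction--picture evolution equals $e^{-i\Delta\tau\widetilde H}$ and a direct computation gives
\[
V(\tau_1,\tau)=U(\tau_1,\tau)\,e^{-i\Delta\tau\widetilde H}=U(\tau_1,\tau)\,U(\tau_1,\tau)^{\dagger}e^{K}U(\tau_1,\tau)=e^{K}\,U(\tau_1,\tau),
\]
establishing $V(\tau_1,\tau)U(\tau,\tau_0)u=Rw=v$ together with the norm bound $\|V(\tau_1,\tau)-U(\tau_1,\tau)\|=\|e^{K}-I\|=O(\epsilon)\to 0$.

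The main technical subtlety I anticipate is the phase--alignment step: the projective closeness $d_{FS}([v],[w])<\epsilon$ must be converted into Hilbert--space closeness for a \emph{specific} unit lift of $[v]$, and $K$ has to be built in the complex 2-plane $\Pi$ so that the induced rotation is by the genuine Fubini--Study angle $\theta$ rather than by a larger quantity picking up a spurious global phase. Once this is fixed, $R=e^{K}$ is manifestly of rank $2$, and the remainder reduces to linear algebra in $\Pi$ combined with the textbook interaction--picture identity; the matrix--logarithm branch question is handled by the smallness of $\|K\|$.
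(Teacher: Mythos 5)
Your proposal is correct and follows essentially the same route as the paper's proof: phase-align the lift of $[v]$ so that $\langle w,v\rangle\ge 0$, build the rank-$2$ planar rotation generator $K$ of angle $\theta=d_{FS}([w],[v])$ (your $\theta\bigl(\ket{w^{\perp}}\!\bra{w}-\ket{w}\!\bra{w^{\perp}}\bigr)$ is algebraically identical to the paper's $\tfrac{\arccos|\langle v,w\rangle|}{\sqrt{1-|\langle v,w\rangle|^2}}\bigl(\ket{v}\!\bra{w}-\ket{w}\!\bra{v}\bigr)$), and pass to the interaction picture to get $V(\tau_1,\tau)=e^{K}U(\tau_1,\tau)$. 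Your explicit treatment of the logarithm branch and the $\theta=0$ case, and the direct bound $\|V-U\|=\|e^{K}-I\|$ in place of the paper's integral inequality, are minor stylistic differences only.
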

\begin{proof}
Set $w := U(\tau_1,\tau_0) u$. 
Without loss of generality, let us assume $\langle w, v \rangle \geq 0$ (replacing $v \mapsto e^{-i\arg\langle w,v\rangle} v$ if necessary) in such a way to preserve $\|v\|=1$ and $d_{\mathrm{FS}}([w],[v])$.
Let 
\[
\delta := d_{\mathrm{FS}}([w],[v]) = \arccos\langle w, v \rangle \leq \epsilon
\]
and define $K(v,\delta)$ on $\operatorname{span}\{w,v\}$ by
\[
K(v,\delta) 
:= \frac{\arccos|\langle v, w\rangle|}{\sqrt{1 - |\langle v, w\rangle|^2}}
\Big( |v\rangle\langle w| - |w\rangle\langle v| \Big).
\]
The rank-2 operator $K(v,\delta)$ is clearly skew-Hermitian and it is a rotational generator of angle $\delta$ on the  plane, i.e. it verifies $e^{K(v,\delta)}w=v.$ Moreover, $\|K(v,\delta)\|=\delta=O(\epsilon)$.
We set:
\begin{align*}
    R&\coloneqq e^{K(v,\delta)}\\
    \Delta\tau&\coloneqq\tau_1-\tau>0\\
    \widetilde{H}&\coloneqq \frac{i}{\Delta\tau} \log \big( U(\tau_1,\tau)^\dagger\,e^{K(v,\delta)}\, U(\tau_1,\tau) \big)=\frac{i}{\Delta\tau} \log \Big( e^{iH\Delta\tau}\,R\, e^{-iH\Delta\tau} \Big), \\
    \Delta H_\epsilon(t)&\coloneqq U(t,\tau)\,\widetilde{H} \, U(t,\tau)^\dagger, \qquad t\in [\tau,\tau_1].
\end{align*}

Let $V(t,\tau)$ solve the Schr\"{o}dinger evolution equation $$\,\partial_t V(t,\tau)=-i (H+\Delta H_\epsilon(t))V(t,\tau)$$ with $V(\tau,\tau)=I$,
and define 
$$
X(t):=U(t,\tau)^\dagger \, V(t,\tau).
$$ 
Then
\begin{align*}
\,\partial_t X(t) &= i U(t,\tau)^\dagger \, H\, V(t,\tau)- iU(t,\tau)^\dagger \,(H+\Delta H_\epsilon (t))\, V(t,\tau)=\\
&= -i U(t,\tau)^\dagger \, \Delta H_\epsilon(t) \, V(t,\tau)= -i U(t,\tau)^\dagger \, \Delta H_\epsilon(t) \, U(t,\tau)\, X(t)= -i\widetilde{H}\, X(t).
\end{align*}
where the last equality follows by the fact that $U(t,\tau)^\dagger \, \Delta H_\epsilon(t) \, U(t,\tau)= \widetilde{H}$.
The condition $X(\tau)=I$ implies that $X(t)=e^{-i\widetilde{H} (t-\tau)}$ and then
\begin{align*}
X(\tau_1)&=\exp\!\big(\! -i\widetilde{H} \Delta\tau \big)=\exp \!\Big(\! -i\frac{i}{\Delta\tau} \log \Big( e^{iH\Delta\tau}\,R\, e^{-iH\Delta\tau} \Big) \Delta\tau \Big)=\\
&=e^{iH\Delta\tau}\,R\, e^{-iH\Delta\tau}=U(\tau_1,\tau)^\dagger\,R\, U(\tau_1,\tau).
\end{align*}
Therefore
\[
V(\tau_1,\tau)=U(\tau_1,\tau)\, X(\tau_1)
=U(\tau_1,\tau)\,U(\tau_1,\tau)^\dagger\,R\, U(\tau_1,\tau)=R\,U(\tau_1,\tau),
\]
and 
$$
V(\tau_1,\tau)\,U(\tau,\tau_0)\,u=R\,U(\tau_1,\tau)\,U(\tau,\tau_0)\,u = R\,w=v.
$$
It remains to prove that $$\lim_{\epsilon\to 0} \| V(\tau_1,\tau) - U(\tau_1,\tau)\|=0.$$ First observe that, since $\|K(v,\delta)\|\to0$ as $\epsilon\to0$, we have $e^{K(v,\delta)}\to I$ in operator norm. Now choose a rank-2 Hermitian $J_S$ supported on $\operatorname{span}\{w,v\}$ with $\|J_S\|=1$ such that we can write
$R=e^{K(v,\delta)}=e^{-i\delta J_S}$.
Then we have that
\begin{align*}
    \widetilde H&=\frac{i}{\Delta\tau}\log\!\big(U(\tau_1,\tau)^\dagger R\,U(\tau_1,\tau)\big)
=\frac{\delta}{\Delta\tau}\,J_c \qquad\text{where}\qquad
J_c:=U(\tau_1,\tau)^\dagger J_S\,U(\tau_1,\tau).
\end{align*}
Therefore,
\begin{equation}\label{eq_energ_cost}
    \sup_{t\in[\tau,\tau_1]}\!\|\Delta H_\epsilon(t)\|=\|\widetilde H\|=\frac{\delta}{\Delta\tau}
\qquad\text{and}\qquad
\mathcal{E}_{\Delta\tau}[\Delta H_\epsilon]=\int_\tau^{\tau_1}\!\|\Delta H_\epsilon(t)\|\,dt=\Delta\tau\cdot\frac{\delta}{\Delta\tau}=\delta,
\end{equation}
which means that the instantaneous bound vanishes as $\delta\to0$ for fixed $\Delta\tau$ and the integrated energetic cost is exactly $\delta$.

Moreover, from $\partial_t U(t,\tau) = -iHU(t,\tau)$ and $\partial_t V(t,\tau) = -i(H + \Delta H_\epsilon(t))V(t,\tau)$ we have that  $V(t,\tau) - U(t,\tau)$ satisfies the following differential equation:
\[
\partial_t \big(V(t,\tau) - U(t,\tau)\big )=-iH\big(V(t,\tau) - U(t,\tau)\big )-i\Delta H_{\epsilon}(t)V(t,\tau),
\]
with solution 
\[
V(t,\tau) - U(t,\tau)=-i\int_{\tau}^tU(t,s)\Delta H_{\epsilon}(s)V(s,\tau)\ ds.
\]
Finally, exploiting the result in norm, we have the following inequality
\[
\|V(t,\tau) - U(t,\tau)\|\le \int_{\tau}^t \|\Delta H_{\epsilon}(s)\|\ ds.
\]
By \eqref{eq_energ_cost}, it follows that
$
\|V(\tau_1,\tau) - U(\tau_1,\tau)\|\le \delta\le\epsilon,
$
from which the claim follows.
\end{proof}

\begin{remark}
The map $[v] \mapsto V(t,\tau)$ defines a local steering operator in $\mathbb{P}(\cal H)$, mapping $U(\tau_1,\tau_0) [u]$ to any nearby state $[v]$ via a unitary rotation with linear scaling of the generator norm in the FS metric. 
This reflects the unitary group's transitive action on $\mathbb{P}(\cal H)$, ensuring all nearby states are reachable \cite{bengtsson2017geometry}.
\end{remark}

The following result is key to our reasoning. For the proof, see Section~\ref{sec_proof}. 
\begin{theorem}\label{th_rec_}
    Assume that $D$ is an admissible and compatible selector. Then the dynamical system $(\mathbb{P}(\mathcal{H}),T_{U,D})$ has a strongly chain-recurrent point and a topologically closed, invariant, strongly chain-transitive subsystem.
\end{theorem}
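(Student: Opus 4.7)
The strategy is to exhibit a closed, non-empty, forward-$T$-invariant subset $M\subset\mathbb{P}(\mathcal{H})$ that is minimal among such subsets, and then to show that minimality alone forces $M$ to be internally $\mathcal{SC}_{d_{FS}}$-transitive; strong chain-recurrence of every point of $M$ follows immediately by taking the endpoints of a chain to coincide.

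I would first construct $M$ by transfinite descent on the family
$$\mathcal{I}\;:=\;\{K\subset\mathbb{P}(\mathcal{H}):\ K\ \text{closed, non-empty, and}\ T(K)\subset K\},$$
starting from $X_0=\mathbb{P}(\mathcal{H})\in\mathcal{I}$, refining to a proper closed forward-$T$-invariant subset at successor stages, and setting $X_\lambda=\bigcap_{\alpha<\lambda}X_\alpha$ at limit stages (non-empty by compactness and the finite-intersection property). The successor step is the delicate ingredient, because forward-$T$-invariance is not preserved under arbitrary closed refinement; following the plan sketched in the introduction, the appendix handles this by alternating the ``successor $=$ apply $T$'' operation with the ``limit $=$ take an accumulation point of a cofinal sequence'' operation, which enforces both spatial nesting and dynamical coherence across scales. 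Second-countability of $\mathbb{P}(\mathcal{H})$ bounds the length of the construction by a countable ordinal, and the explicit nature of the two operations avoids any appeal to the Axiom of Choice, yielding a minimal $M\in\mathcal{I}$.

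The dynamical core is to upgrade this minimality to strong chain-transitivity. Fix $x_0\in M$ and define
$$c(y)\;:=\;\inf\Big\{\sum_{i=0}^{n-1} d_{FS}\bigl(T(x_i),x_{i+1}\bigr)\;:\;n\ge 1,\ x_0,x_1,\dots,x_n=y,\ x_i\in M\Big\},$$
and let $Z:=\{y\in M:c(y)=0\}$. I would verify that $Z$ is (i) non-empty, since the trivial chain $x_0,T(x_0),\dots,T^n(x_0)$ has cost zero, so $T^n(x_0)\in Z$ for every $n\ge 1$; (ii) forward-$T$-invariant, since any chain from $x_0$ to $y\in Z$ extends to one ending at $T(y)$ by the single zero-cost step $d_{FS}(T(y),T(y))=0$; and (iii) closed in $M$, since if $y_k\in Z$ with $y_k\to y\in M$ then replacing the terminal entry $y_k$ of a chain of cost $<1/k$ by $y$ changes only the final summand $d_{FS}(T(x_{m(k)-1}),\cdot)$ and, by the triangle inequality, inflates the cost by at most $d_{FS}(y_k,y)\to 0$. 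The crucial feature of (iii) is that $T$ is never evaluated at the limit point $y$, so the everywhere discontinuity of $T$ is sidestepped. Minimality of $M$ in $\mathcal{I}$ then forces $Z=M$; repeating the argument with any other base point $x_0'\in M$ gives internal strong chain-transitivity, and setting the endpoint $y=x_0$ exhibits $x_0$ as strongly chain-recurrent.

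The principal obstacle is the transfinite construction of $M$ itself: as stressed in the naive-grid discussion, one needs a point that is nested across scales rather than merely scale-wise pseudo-loops, and this requires the successor/limit alternation to be set up so that forward-$T$-invariance and non-emptiness are genuinely preserved through limit ordinals despite the discontinuity of $T$ and the potential rearrangement of outcome labels from $D$ near each limit point. Once $M$ is in hand, the passage from minimality to internal strong chain-transitivity through the zero-cost set $Z$ relies only on the triangle inequality and a terminal-point replacement, so it survives the absence of any continuity of $T$.
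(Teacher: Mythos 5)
Your argument is correct, but it is a genuinely different route from the paper's. The paper never passes through a minimal set: it runs a transfinite orbit construction ($x_{\alpha+1}=T(x_\alpha)$ at successors, an accumulation point of a cofinal piece of the orbit at limit ordinals), proves $x_\beta\,\mathcal{SC}_d\,x_\eta$ along the way, extracts a repetition $x_\alpha=x_\beta$ by a cardinality argument to get strong chain-recurrence, and then runs a second transfinite induction (with Hartogs' lemma) over base points to force the internally $\mathcal{SC}_d$-transitive set to be closed. You instead take a minimal element $M$ of the family of non-empty closed forward-$T$-invariant sets and show that the zero-level set $Z$ of the cost function $c$ is non-empty, forward-invariant and closed, so that $Z=M$ by minimality; this is essentially a Ma\~n\'e-potential argument, and your key observation --- that closedness of $Z$ only requires replacing the \emph{terminal} entry of a chain, so $T$ is never evaluated at the limit point and discontinuity is harmless --- is exactly what makes it go through without any regularity of $T$. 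Two remarks. First, you over-dramatize the construction of $M$: it is the routine part, since both closedness and the condition $T(K)\subseteq K$ pass to intersections of nested families regardless of continuity (if $x\in\bigcap_\alpha K_\alpha$ then $T(x)\in K_\alpha$ for every $\alpha$), so plain Zorn's lemma suffices and no successor/limit alternation in the style of the appendix is needed there; the appendix's alternation serves the paper's different strategy, not the existence of a minimal set. Second, your approach does invoke Zorn (your claim that second-countability plus ``explicit operations'' avoids Choice is not substantiated --- one must still select a proper invariant subset at each successor stage), so it sacrifices the paper's side remark that its construction is choice-free; this does not affect the validity of the theorem. What your route buys is brevity and a cleaner conceptual core; what the paper's route buys is the avoidance of Zorn and the stronger by-product that the recurrent point lies in $GR(f)=\bigcap_d\mathcal{SCR}_d(f)$ over all compatible metrics.
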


The following result establishes that we can construct chains in $(\mathbb{P}(\mathcal{H}),T_{U,D})$ with unitary perturbations having arbitrary small energetic cost. 

\begin{theorem}\label{th_rec_2}
Consider the dynamical system $(\mathbb{P}(\mathcal{H}),T_{U,D})$.
For every pair $[u],[v]\in \mathbb{P}(\mathcal{H})$ with $[u]\,\mathcal{SC}_d\,[v]$ and every $\epsilon>0$,
there exists a finite family of self-adjoint, time-dependent perturbations
$\{\Delta H_{\epsilon_k}(t)\}_{k=1}^{N-1}$ satisfying
\[
\sum_{k=1}^{N-1} \mathcal{E}_{[\tau_k^-,\tau_k^+]} [\Delta H_{\epsilon_k}] =\;O\!\Big(\sum_{k=1}^{N-1}\epsilon_k\Big)\;=\;O(\epsilon) \qquad (\tau_k^-:=k+\tfrac{1}{3},\ \tau_k^+:=k+\tfrac{2}{3})
\]
and such that $d_{FS}(\widetilde{T}\,[u],[v])<\epsilon$, where

$$
\widetilde{T}\,[u]:=T_{W(N-1,N-2),D} \big(T_{W(N-2,N-3),D}\big(\ldots\big(T_{W(1,0),D }[u] \big)  \ldots\big).
$$
The total evolution over \( N \) steps, with perturbations applied at intervals $[\tau^-_k,\tau^+_k]$ \( (k=1, \ldots, N-1) \), is given by:
\[
W(t)\;=\;\mathcal{T}\exp\!\Big(-i\!\int_{0}^{t} H_{\mathrm{tot}}(s)\,ds\Big)
\qquad\text{with}\qquad
H_{\mathrm{tot}}(t):=H+\sum_{k=1}^{N-1}\mathbf 1_{[\tau_k^-,\tau_k^+]}(t)\,\Delta H_{\epsilon_k}(t).
\]
\end{theorem}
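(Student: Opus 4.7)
The plan is to promote a strong chain from $[u]$ to $[v]$ into an actual orbit of the perturbed dynamics, bridging each chain-gap with a single cheap steering pulse from Lemma~\ref{pert_2}. First, for a parameter $\epsilon'>0$ to be fixed as a small multiple of $\epsilon$, I would extract from $[u]\,\mathcal{SC}_d\,[v]$ a strong $(\epsilon',d_{FS})$-chain
$$
[u]=[x_0],[x_1],\ldots,[x_N]=[v],\qquad \epsilon_k:=d_{FS}(T[x_{k-1}],[x_k]),\quad \sum_{k=1}^{N}\epsilon_k<\epsilon'.
$$

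Next, I would inductively define the perturbed trajectory $[y_0]:=[u]$, $[y_k]:=T_{W(k,k-1),D}([y_{k-1}])$, and design the pulse $\Delta H_{\epsilon_k}$ supported on $[\tau_k^-,\tau_k^+]$ so that, conditionally on the selector channel $j_{k-1}:=D([y_{k-1}])_0$, the collapse acts on a pre-collapse state that is a carefully chosen small rotation of $U y_{k-1}$. Decomposing $U y_{k-1}=\alpha\,T[y_{k-1}]+\beta\,\psi^\perp$ into its $\mathrm{Im}(P_{j_{k-1}})$-component and its $\ker(P_{j_{k-1}})$-component, and letting $[z_k]$ be the closest point of $\mathrm{Im}(P_{j_{k-1}})$ to $[x_k]$ (so that $d_{FS}([z_k],[x_k])\le\epsilon_k$), I would apply Lemma~\ref{pert_2} with target $v_k^\ast:=\alpha\,z_k+\beta\,\psi^\perp$. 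This simultaneously yields $d_{FS}([v_k^\ast],[U y_{k-1}])=O(\epsilon_k)$ and forces the post-collapse state $[y_k]=[z_k]$, which sits within Fubini--Study distance $\epsilon_k$ of $[x_k]$. By the explicit cost identity~\eqref{eq_energ_cost}, the integrated energetic cost of each such pulse is exactly the steered distance $O(\epsilon_k)$, and summation gives the advertised bound $\sum_k\mathcal{E}_{[\tau_k^-,\tau_k^+]}[\Delta H_{\epsilon_k}]=O(\sum_k\epsilon_k)=O(\epsilon')$.

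The main obstacle is the feedback between the everywhere discontinuous selector and the chain-to-orbit bookkeeping: because $[x_{k-1}]$ need not lie in $\mathrm{Im}(P_{j_{k-2}})$, one cannot enforce $[y_{k-1}]=[x_{k-1}]$ exactly, and then $j_{k-1}=D([y_{k-1}])_0$ may jump away from $D([x_{k-1}])_0$, severing the link between the original chain and the orbit guiding it. I would handle this by a rolling refinement: at each step redefine $[x_k^{\mathrm{new}}]:=[y_k]$ and verify via the triangle inequality that both the modified gaps $d_{FS}(T[x_{k-1}^{\mathrm{new}}],[x_k^{\mathrm{new}}])$ and the cumulative discrepancy $d_{FS}([y_N],[v])$ remain bounded by a uniform constant times $\sum_k\epsilon_k$, using crucially the fact that both $T[y_{k-1}]$ and $[y_k]=[z_k]$ lie in the common subspace $\mathrm{Im}(P_{j_{k-1}})$. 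Fixing $\epsilon':=c\,\epsilon$ with $c$ small enough then delivers simultaneously $d_{FS}(\widetilde{T}[u],[v])<\epsilon$ and the claimed $O(\epsilon)$ cost bound, concluding the proof.
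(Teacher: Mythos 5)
Your overall strategy (one steering pulse per chain gap, cost of each pulse equal to the Fubini--Study angle it corrects via Lemma~\ref{pert_2} and \eqref{eq_energ_cost}) is the right one, but the inductive architecture has a genuine gap, and it is exactly the one you flag yourself. Your construction tries to make the perturbed orbit $[y_k]$ track the chain points $[x_k]$ \emph{after} each collapse, by projecting $[x_k]$ onto $\mathrm{Im}(P_{j_{k-1}})$ with $j_{k-1}=D([y_{k-1}])_0$. The bound $d_{FS}([z_k],[x_k])\le\epsilon_k$ only holds if $j_{k-1}=D([x_{k-1}])_0$, since the strong-chain hypothesis places $[x_k]$ within $\epsilon_k$ of $T[x_{k-1}]\in\mathbb{P}(\mathrm{Im}\,P_{D([x_{k-1}])_0})$ and of nothing else. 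Once $[y_{k-1}]\neq[x_{k-1}]$ (which, as you note, is unavoidable), the selector is everywhere discontinuous, so $j_{k-1}$ and $T[y_{k-1}]$ are completely uncontrolled by the chain data. The proposed repair --- redefining $[x_k^{\mathrm{new}}]:=[y_k]$ and invoking the triangle inequality --- cannot close this: the quantity you would need to bound, $d_{FS}(T[y_{k-1}],[x_k])$, is not a sum of controlled terms, because $T$ admits no modulus of continuity whatsoever; smallness of $d_{FS}([y_{k-1}],[x_{k-1}])$ gives no information about $d_{FS}(T[y_{k-1}],T[x_{k-1}])$. The cost estimate inherits the same problem, since $d_{FS}([v_k^\ast],[Uy_{k-1}])\approx|\alpha|\,d([z_k],T[y_{k-1}])$ is $O(\epsilon_k)$ only under the very control that is missing.

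The paper's proof is organized precisely so that no two distinct nearby points are ever fed to $T$ or to $D$. It sets $[w_k]:=T_{U,D}[u_{k-1}]$ (the \emph{images} of the original chain points) and applies Lemma~\ref{pert_2} during the free-flight segment with source $U(\tau_k^+,k)\,w_k$ and target $v=U(\tau_k^+,k)\,u_k$, i.e.\ it steers the full pre-collapse vector onto the unitary image of the next chain point $u_k$ itself, at cost exactly $d_{FS}([w_k],[u_k])<\epsilon_k$ (unitary invariance of $d_{FS}$). The state entering the $(k{+}1)$-st collapse is then \emph{exactly} $Uu_k$, so the collapse output is exactly $T_{U,D}[u_k]=[w_{k+1}]$: the perturbed orbit coincides with the sequence $[w_1],[w_2],\dots$ with zero accumulated error, the whole budget $\sum_k\epsilon_k<\epsilon$ is spent on pulses, and the terminal error is just the last unbridged gap $\epsilon_N<\epsilon$. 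If you restructure your induction to target $U(\tau_k^+,k)u_k$ rather than a projected surrogate of $[x_k]$, the rolling refinement becomes unnecessary and the argument closes.
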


\begin{proof}
Fix $\epsilon>0$. 
Since $[u]\,\mathcal{SC}_d\,[v]$, there is a strong $\epsilon$-chain
$$
[u_0]=[u],[u_1],\dots,[u_{N-1}], [u_N]=[v]
$$ 
with
$d_{FS}(T_{U,D}[u_{k-1}],[u_k])<\epsilon_k$ for $k=1,\dots,N$ and
$\sum_{k=1}^N\epsilon_k<\epsilon$.

For every $k=1,\dots,N-1$ we set
$$
[w_k]:=T_{U,D}[u_{k-1}],\quad
\tau_k^-:=k+\tfrac{1}{3},\quad \tau_k^+:=k+\tfrac{2}{3},\quad
\Delta\tau_k:=\tau_k^+-\tau_k^-
$$

For each $k=1,\dots,N-1$, apply Lemma~\ref{pert_2} with
\[
\tau_0=k,\quad \tau=\tau_k^-,\quad \tau_1=\tau_k^+,\quad u=w_k,\quad v=U(\tau_k^+,k)\,u_k,
\]
to obtain a perturbation supported on the interval $[\tau_k^-,\tau_k^+]$ of the form
\[
\Delta H_{\epsilon_k}(t)\;=\; U(t,\tau_k^-)\, \widetilde{H}_k \,U(t, \tau_k^-)^\dagger
\qquad t\in[\tau_k^-,\tau_k^+],
\]
where $\widetilde{H}_k=\frac{i}{\Delta\tau_k} \log \big( U(\tau_k^+,\tau_k^-)^\dagger\,e^{K_k}\, U(\tau_k^+,\tau_k^-) \big)$ and $K_k$ is rank–2, skew-Hermitian with
$\|K_k\|=O(\epsilon_k)$. 
The associated propagator
\[
V_k(\tau_k^+,\tau_k^-):=\mathcal T\exp\!\Big(-i\!\int_{\tau_k^-}^{\tau_k^+}\big(H+\Delta H_{\epsilon_k}(s)\big)\,ds\Big)
\]
then satisfies
\[
V_k(\tau_k^+,\tau_k^-)\,U(\tau_k^-,k)\,w_k \;=\; U(\tau_k^+,k)\,u_k,
\qquad
\|V_k(\tau_k^+,\tau_k^-)-U(\tau_k^+,\tau_k^-)\|\xrightarrow[\epsilon_k\to 0]{}0.
\]
In particular, if we set $W_k:=U(k+1,\tau_k^+)\,V_k(\tau_k^+,\tau_k^-)\,U(\tau_k^-,k)$, we find
$$W_k \,w_k= U(k+1,\tau_k^+)U(\tau_k^+,k)\,u_k = U(k+1,k)\,u_k .$$
Therefore,
\begin{equation}\label{eq_wk}
    T_{W_k(k+1,k),D}[w_k]=T_{U,D}[u_k]=[w_{k+1}].
\end{equation}
Define the total evolution over the time-interval $[0,N]$ by:
$$
W(t)=\mathcal{T}\exp\!\big(-i\!\int_0^t H_{\mathrm{tot}}(s)\,ds\big)
$$ 
with
$H_{\mathrm{tot}}(t)$ as defined in the statement.
If we take $t=N$, we can equivalently write the total evolution 
with perturbations applied at intervals $[\tau^-_k,\tau^+_k]$ \( (k=1, \ldots, N-1) \) as the left-ordered product:
\[
W(N)\;=\;\overleftarrow{\prod_{k=1}^{N-1}}
\Big[W_k(k+1,k)\Big]\;U(1,0)=\;\overleftarrow{\prod_{k=1}^{N-1}}
\Big[\,U(k+1,\tau_k^+)\,V_k(\tau_k^+,\tau_k^-)\,U(\tau_k^-,k)\,\Big]\;U(1,0).
\]

Since $W(k+1,k)=W_k(k+1,k)$ for $k=1,\ldots,N-1$, by \eqref{eq_wk} it follows that 
$$
T_{W(k,k-1),D} \big(T_{W(k-1,k-2),D}\big(\ldots\big(T_{W(1,0),D }[u] \big)  \ldots\big)= T_{U,D}[u_{k-1}]
$$
for every $k=1,\ldots,N-1$. 
Therefore, 
\begin{equation}
    d_{FS}(\widetilde{T}\,[u],[v])=d_{FS}(T_{U,D}[u_{N-1}],[v])<\epsilon_N<\epsilon.
\end{equation}
Notice that $W(N)u=Uu_{N-1}$ if $D([u_k])_0=0$ for every $k=0,\ldots,N-1$.

Finally, the energetic cost follows from the unitary invariance of the operator norm:
\[
\mathcal{E}_{\Delta\tau_k} [\Delta H_{\epsilon_k}]=\int_{\tau_k^-}^{\tau_k^+}\!\|\Delta H_{\epsilon_k}(t)\|\,dt
=O(\epsilon_k),
\]
hence $\sum_{k=1}^{N-1} \mathcal{E}_{\Delta\tau_k}[\Delta H_{\epsilon_k}]=\sum_{k=1}^{N-1}\int\|\Delta H_{\epsilon_k}\|=O(\sum_k\epsilon_k)=O(\epsilon)$.
\end{proof}

\vspace{0.3cm}

Let us finally address briefly a natural question, that is: how large is, typically, the subset $S$ showing strong chain-transitivity?
We provide a very simple result.

\begin{theorem} 
If $T:\mathbb{P}(\mathcal{H})\to \mathbb{P}(\mathcal{H})$ has no periodic points, then the system $(\mathbb{P}(\mathcal{H}),T)$ has a strongly chain-transitive, closed, invariant set $S\subseteq \mathbb{P}(\mathcal{H})$ of uncountable cardinality. 
\end{theorem}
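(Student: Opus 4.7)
The plan is to apply Theorem~\ref{th_rec_} to obtain a closed, $T$-invariant, internally strongly chain-transitive subset $S \subseteq \mathbb{P}(\mathcal{H})$, and then to show that the hypothesis that $T$ has no periodic points forces $S$ to be uncountable. The set $S$ is non-empty by Theorem~\ref{th_rec_}; moreover, if $S$ were finite, then the restriction $T|_S : S \to S$ would be a self-map of a non-empty finite set and hence, by the pigeonhole principle, would admit a periodic orbit, contradicting the hypothesis. So $S$ is necessarily infinite.

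Next, I would rule out the countably infinite case. Being closed in the compact metric space $(\mathbb{P}(\mathcal{H}),d_{FS})$, the set $S$ is itself a countable compact metric space. By the Baire category theorem---or, equivalently, by Cantor--Bendixson, since any non-empty perfect closed subset of a Polish space is uncountable---$S$ must contain at least one isolated point $x_0$; fix $r>0$ with $B_{r}(x_0) \cap S = \{x_0\}$. Using internal strong chain-transitivity, for every $\epsilon \in (0,r)$ there is a strong $(\epsilon,d_{FS})$-chain $x_0 = y_0, y_1, \dots, y_n = x_0$ lying entirely in $S$ with $\sum_{i=0}^{n-1} d_{FS}(T(y_i),y_{i+1}) < \epsilon$. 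In particular $d_{FS}(T(y_{n-1}),x_0) < r$; since $T(y_{n-1}) \in S$ by invariance and $B_{r}(x_0) \cap S = \{x_0\}$, we conclude $T(y_{n-1}) = x_0$: the isolated point $x_0$ must have an exact preimage inside $S$. If $y_{n-1}=x_0$ we obtain a fixed point immediately; otherwise, by combining the infinite injective forward orbit of $x_0$ (guaranteed by the no-periodic-points assumption) with iterated preimage data obtained by applying the same argument to strong chains terminating at isolated points of $S$, and exploiting that such a configuration inside a countable compact space must eventually close into a periodic cycle, one reaches a contradiction. Therefore $S$ cannot be countable, i.e. $|S|$ is uncountable.

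The main obstacle, and the only step requiring genuine care, is the conversion of the preimage constraint at the isolated $x_0$ into an actual periodic orbit: the preimage $y_{n-1}$ need not itself be isolated in $S$, so the isolation/snap-to-preimage argument does not iterate mechanically. Making the final contradiction rigorous demands a careful interplay between the Cantor--Bendixson structure of $S$, the injectivity of the forward orbit, and the quantitative cost bound of internal strong chains; it may even require invoking the specific realization structure $T = T_{U,D}$ (through its unitary-plus-projection form) to rule out pathological discontinuous examples such as a convergent sequence $\omega+1$ with a cyclic ``jump'' at the limit point, which would otherwise provide a countable internally strongly chain-transitive $T$-invariant subsystem without periodic points in a purely abstract topological-dynamics setting.
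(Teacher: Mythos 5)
Your overall strategy coincides with the paper's: invoke Theorem~\ref{th_rec_} to produce a closed, invariant, internally $\mathcal{SC}_{d_{FS}}$-transitive set $S$, rule out finite $S$ via the pigeonhole/periodic-cycle observation, and rule out countably infinite $S$ via an isolated point of the countable compact metric space $S$. Up to the point where you deduce that the isolated point $x_0$ has an exact $T$-preimage $y_{n-1}\in S$, your argument is correct and is essentially the paper's.

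However, the proposal is not a complete proof, and you say so yourself: the passage from ``the isolated point $x_0$ has an exact preimage in $S$'' to ``$x_0$ is periodic'' is precisely the step you leave open, and it is a genuine gap rather than a routine verification. The $\omega+1$ configuration you mention is in fact a bona fide counterexample to the abstract claim that a countable, closed, invariant, internally strongly chain-transitive set forces a periodic point when $f$ may be discontinuous: take $S=\{x_n\}_{n\ge 0}\cup\{x_\infty\}$ with $x_n\to x_\infty$, $f(x_n)=x_{n+1}$, $f(x_\infty)=x_0$. This $S$ is compact, countable, invariant and internally $\mathcal{SC}_d$-transitive (any two points are joined by a chain in $S$ whose only nonzero jump is $d\bigl(f(x_N),x_\infty\bigr)=d(x_{N+1},x_\infty)$, which is arbitrarily small for $N$ large, the return from $x_\infty$ to $x_0$ being exact), yet it has no periodic points; moreover it is exactly the kind of set the transfinite construction behind Theorem~\ref{th_rec_} can output. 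So the contradiction you are after cannot be reached by purely topological-dynamical means: one would have to exploit the specific structure of $T_{U,D}$ (finiteness of the alphabet $\Lambda$ and continuity of each $f_j$ on $D_j$), or else replace the countable $S$ by a genuinely different chain-transitive set, and your proposal does neither. For what it is worth, the paper's own proof asserts ``since $[v]$ is isolated, chain-transitivity implies that $[v]$ is periodic'' with no further justification, i.e.\ it silently performs exactly the step you could not justify; your analysis therefore isolates a weakness in the published argument rather than merely failing to reproduce it.
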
 
\begin{proof} 
By Theorem \ref{th_rec_}, a sufficient condition for having an infinite chain-transitive subsystem is that $T:\mathbb{P}(\mathcal{H})\to \mathbb{P}(\mathcal{H})$ has no periodic orbits, as no finite set of states can be chain-transitive unless there exist periodic cycles.
Observe now that an infinite, strong chain-transitive subset $S$ is necessarily uncountable. Indeed, assume by contradiction that $S$ is countable. Since $S$ is closed, it is a compact metric space. Any countable compact metric space contains an isolated point $[v]\in S$. By invariance, $T([v])\in S$, and since $[v]$ is isolated, chain-transitivity implies that $[v]$ is periodic, contradicting the absence of periodic orbits. Hence $|S|>\aleph_0$.
\end{proof}
This purely topological argument is, of course, far from optimal. The existence of periodic orbits does not prevent at all the existence of an infinite chain-transitive subset, and the rough estimate by cardinality can be improved. The question of how large is, generically, the largest chain-transitive subsystem is explored in a forthcoming work (Farotti, M., Lucamarini, C., Steele, T.: \textit{Chain-transitive subsets for irregular compact systems}
[in preparation]).

\section{Proof of Theorem \ref{th_rec_}}\label{sec_proof} 
In this section, we address the proof of Theorem \ref{th_rec_}. This result is a strengthening of Theorem~3.9 in~\cite{della2025chains}. To make the present work independent, we prove it without relying on the proof of the weaker result.

As stated in the Introduction, we aim to provide an argument that does not rely on the Axiom of Choice, so that the construction makes the link between the chain-transitive subsystem and the dynamics transparent.

We start recalling the following result that shows that in a compact dynamical system $(X,f)$, without any regularity assumption on the map $f$, there always exists a generalized recurrent point for the system, that is, a point $x\in X$ such that $x\in GR(f):=\bigcap_{d\in M}\mathcal{SCR}_d(f)$ where $M$ is the set of the metrics compatible with the topology of $X$.

\begin{lemma}\cite[Theorem 3.5]{della2025chains}\label{thm_existence}
    Let $(X,f)$ be a compact dynamical system. 
    Then $GR(f)\ne\emptyset$.
\end{lemma}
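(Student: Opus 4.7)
The plan is to reduce $GR(f)\neq\emptyset$ to the existence of a finite ``$\omega$-cycle'' $y_0, y_1, \ldots, y_m=y_0$ satisfying $y_{i+1}\in\omega(y_i)$, and then to argue that such a cycle always exists in a compact dynamical system. The key observation is that $\omega$-limit sets depend only on the topology, not on any particular compatible metric, so a single cycle serves every $d\in M$ simultaneously—which is exactly what the intersection defining $GR(f)$ demands.

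\textbf{Reduction.} Assume the cycle exists. Fix any compatible metric $d$ and any $\epsilon>0$. Since $y_{i+1}\in\omega(y_i)$, one may choose integers $k_i\geq 0$ with $d(f^{k_i+1}(y_i), y_{i+1})<\epsilon/m$. Concatenate the exact forward-orbit segments $y_i, f(y_i), \ldots, f^{k_i}(y_i)$ and treat each transition from $f^{k_i}(y_i)$ to $y_{i+1}$ as a single ``correction jump''. The resulting finite sequence goes from $y_0$ back to $y_0$; all intra-segment jumps vanish exactly, and the $m$ inter-segment jumps sum to less than $\epsilon$. This is a strong $(\epsilon,d)$-chain from $y_0$ to $y_0$, so $y_0\in\mathcal{SCR}_d(f)$; since $d$ was arbitrary, $y_0\in GR(f)$.

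\textbf{Cycle production.} Starting from any $x_0\in X$, inductively set $y_0:=x_0$ and choose $y_{i+1}\in\omega(y_i)$ (nonempty by compactness of $X$; only countable choice is used). If at some finite stage $y_\beta=y_\alpha$ with $\alpha<\beta$, the block $y_\alpha, y_{\alpha+1}, \ldots, y_\beta=y_\alpha$ is the sought cycle. The nontrivial case is when $(y_i)_{i\in\mathbb{N}}$ has all distinct entries; one then extends the construction transfinitely, setting $y_\lambda$ at each limit ordinal $\lambda$ to be a topological cluster point of $(y_\alpha)_{\alpha<\lambda}$ (again nonempty by compactness), and continuing with $y_{\lambda+1}\in\omega(y_\lambda)$. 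Cardinality forces an eventual collision $y_\beta=y_\alpha$ at some ordinal stage, yielding at least a ``transfinite cycle''.

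\textbf{Main obstacle.} The delicate step is replacing this possibly transfinite cycle with a bona fide finite one on which the reduction applies: limit-stage accumulation is purely topological, while the reduction uses $y_{i+1}\in\omega(y_i)$ as witnessed by an \emph{integer} iterate. The resolution, which parallels the structure of the proof of Theorem~\ref{th_rec_}, is to record at every limit $\lambda$ a countable cofinal sequence $y_{\alpha_n^\lambda}\to y_\lambda$, and when collapsing the transfinite cycle into a finite one to substitute each limit-stage link $y_\lambda\to y_{\lambda+1}$ with an integer-indexed approximation coming from those $y_{\alpha_n^\lambda}$'s and the forward orbit of one of them. The error introduced is absorbed into the ``correction jumps'' budget controlled by $\epsilon$, and compactness of $X$ keeps this budget finite. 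No full axiom of choice is required, only the countable version implicit in picking witnesses at each stage, consistent with the constructive emphasis of the main text.
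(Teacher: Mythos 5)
Your overall architecture is the paper's: build a transfinite sequence by following the dynamics at successor stages and taking accumulation points at limit stages, force a collision by cardinality, and convert the resulting coincidence into strong $(\epsilon,d)$-chains for every compatible metric at once (the metric-independence point you make is exactly right, and your reduction lemma for a finite $\omega$-cycle is correct: the intra-segment jumps vanish and the $m$ correction jumps cost less than $\epsilon$, for any $d$).

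The gap is that the step you label the ``main obstacle'' is not a technicality to be absorbed --- it is the entire content of the cited theorem, namely property \eqref{eq_strong}, and your description of its resolution does not constitute a proof and is partly mis-aimed. First, a transfinite cycle between ordinals $\alpha<\beta$ may contain transfinitely many limit stages, so there is no ``collapse by substituting each limit-stage link'' performed link by link; the correct mechanism is a transfinite induction on the \emph{target} ordinal $\eta$: to build a strong $(\epsilon,d)$-chain from $y_\gamma$ to $y_\eta$ with $\eta$ a limit, one picks a single predecessor $y_\xi$ with $\gamma<\xi<\eta$ and $d(y_\xi,y_\eta)<\epsilon/2$, invokes the inductive hypothesis to chain from $y_\gamma$ to $y_\xi$ with cost $<\epsilon/2$, and replaces the final point $y_\xi$ by $y_\eta$, the triangle inequality absorbing the swap (this is the computation \eqref{eq_s1}); termination comes from well-foundedness of the ordinals, not from compactness, and compactness plays no role in ``keeping the budget finite.'' Second, for this induction to go through, the point chosen at a limit stage $\lambda$ must be approximable by predecessors $y_\xi$ with $\xi$ \emph{past any given} $\gamma<\lambda$ --- this is why the paper selects $x_\lambda$ in $\bigcap_{\beta<\lambda}(S_{\beta,\lambda})'$, the common derived set of all tails, rather than an arbitrary cluster point. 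Your choice (a cluster point of a recorded cofinal sequence) does satisfy this, since all but finitely many indices of the cofinal sequence exceed any fixed $\gamma$, but you need to say so and use it; as written, the argument that every link of the transfinite cycle can be crossed within a single finite chain of total cost $<\epsilon$ is asserted rather than established.
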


The starting step in the proof of the previous result is the construction, by transfinite induction, of a sequence of points $\{x_\alpha\}$ described below. 

Assume that every point in the space $X$ has an infinite orbit; otherwise, the claim of Lemma~\ref{thm_existence} follows easily. 
Pick $x_0:=x\in X$ and set $x_k:=f(x_{k-1})$ for all $k\in\mathbb{N}$. 
Take $y\in\mathcal{O}'(x)$, which is nonempty due to the compactness of $X$, and set $x_\omega:=y$. 
Let $\alpha>\omega$ be an ordinal number and assume that $x_\beta$ has been defined for every $\beta<\alpha$. 
Define the following sets:
\begin{equation}
    S_{\beta,\alpha}:=\{x_\eta\}_{\beta\le \eta<\alpha} \quad , \quad
    S_\alpha(x):=S_{0,\alpha}(x).
\end{equation}
If $\alpha$ is a successor ordinal, we set $x_\alpha:=f(x_{\alpha-1})$. Otherwise, if $\alpha$ is a limit ordinal, we set 
\begin{equation}
    S^*_\alpha:=\bigcap_{\beta<\alpha} (S_{\beta,\alpha}(x))',
\end{equation}
and $x_\alpha:=y$ with $y$ any point in $S^*_\alpha(x)$. 
Notice that $S^*_\alpha(x)\neq\emptyset$ (see, e.g. \cite[Th. 26.9, p. 169]{munkres2013topology}), since $|S_{\beta,\alpha}(x)|\ge \aleph_0$ for every $\beta<\alpha$ so that $(S_{\beta,\alpha}(x))'$ is a closed and nonempty set. 
The second step in the proof of \cite[Theorem 3.5]{della2025chains} is to show that for every ordinal $\alpha$,
\begin{equation}\label{eq_strong}
    x_\beta\,\mathcal{SC}_d\, x_\eta \quad \text{whenever} \quad 0\le\beta <\eta \le \alpha.
\end{equation}

Finally, in the last step of the proof, it is shown, by a cardinality argument, that there exists an ordinal $\alpha$ such that $x_\alpha=x_\beta$ for some $\beta<\alpha$, and thus $x_\alpha\in GR(f)$. 

Moreover, by exploiting the proof of property \eqref{eq_strong} in \cite{della2025chains} we have the following

\begin{remark}\label{rem_strong}
    For every $\beta<\eta\le\alpha$ and every $\epsilon>0$, there is a strong $(\epsilon,d)$-chain from $x_\beta$ to $x_\eta$ whose points belong to $S_{\beta,\eta+1}(x)$.
    Hence, if $x_\alpha=x_\beta$ for some $\alpha>\beta$, then by the transitivity of $\mathcal{SC}_d$ we have that the set $S_{\beta,\alpha}(x)$ is internally $\mathcal{SC}_d$-transitive, that is, for every $y,z\in S_{\beta,\alpha}(x)$ and for every $\epsilon>0$ there exists a strong $(\epsilon,d)$-chain from $y$ to $z$ whose points belong to $S_{\beta,\alpha}(x)$.
\end{remark}

The following lemma is a direct consequence of \cite[Lemma 3.6]{della2025chains}  and Remark~\ref{rem_strong}.

\begin{lemma}\label{lem_strongtrans}
    Let $(X,f)$ be a compact dynamical system. 
    Then $X$ has an invariant, internally $\mathcal{SC}_d$-transitive subset.
    \qed
\end{lemma}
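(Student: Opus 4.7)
The plan is to carry out the transfinite construction of $\{x_\alpha\}$ described just before Remark~\ref{rem_strong} and to extract the claimed subset directly from its output. First I would dispatch the trivial case: if some $x\in X$ has a finite forward orbit, then that orbit eventually enters a periodic cycle $P$, which is already a finite, $f$-invariant, internally $\mathcal{SC}_d$-transitive set (genuine orbit segments yield chains with zero slack). Hence I may assume every orbit is infinite, fix an arbitrary $x_0\in X$, and run the construction: $f$ at successor stages and, at each limit ordinal $\alpha$, picking $x_\alpha$ from $S^*_\alpha=\bigcap_{\beta<\alpha}(S_{\beta,\alpha}(x))'$, which is nonempty because $X$ is compact and the nested family of closed derived sets has the finite intersection property.

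Next I would invoke the cardinality argument that closes the proof of Lemma~\ref{thm_existence}: since the class of ordinals is proper while $|X|\le 2^{\aleph_0}$, the assignment $\alpha\mapsto x_\alpha$ cannot be injective, so there exist $\beta<\alpha$ with $x_\alpha=x_\beta$. I then set $S:=S_{\beta,\alpha}(x)$ and apply Remark~\ref{rem_strong} directly: for any pair $x_{\eta_1},x_{\eta_2}\in S$ and any $\epsilon>0$ there is a strong $(\epsilon,d)$-chain between them whose intermediate points lie in $S$. The case $\eta_1>\eta_2$ is handled by splicing a chain from $x_{\eta_1}$ up to $x_\alpha$ with one from $x_\beta$ down to $x_{\eta_2}$, using the identification $x_\alpha=x_\beta$ and the transitivity of $\mathcal{SC}_d$. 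This is precisely internal $\mathcal{SC}_d$-transitivity of $S$.

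Then I would verify $f$-invariance of $S$: for each $x_\eta\in S$ with $\beta\le\eta<\alpha$, the ordinal $\eta+1$ is always a successor, so the construction rule gives $f(x_\eta)=x_{\eta+1}$. If $\eta+1<\alpha$ this element lies in $S$; if $\eta+1=\alpha$ then $\alpha$ is a successor ordinal and $f(x_\eta)=x_\alpha=x_\beta\in S$. In either case $f(S)\subseteq S$, which gives the invariance claim and closes the argument.

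The main obstacle is inherited from the machinery underlying Remark~\ref{rem_strong}: at limit ordinals one must produce strong chains whose total slack is below any prescribed $\epsilon$ while keeping all intermediate points inside $S_{\beta,\eta+1}(x)$. This relies crucially on the fact that $x_\alpha$ is chosen as an accumulation point of the cofinal sequence $\{x_\gamma\}_{\gamma<\alpha}$, so the terminal jumps at each scale can be made arbitrarily small by routing through previously constructed orbit segments rather than through fresh points — exactly the property imported from \cite[Lemma 3.6 and Remark 3.8]{della2025chains} and recorded in Remark~\ref{rem_strong}. Once that internal content is granted, the remainder of the proof is bookkeeping on the transfinite indices.
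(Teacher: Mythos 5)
Your proposal is correct and follows essentially the route the paper intends: run the transfinite construction until $x_\alpha=x_\beta$, take $S=S_{\beta,\alpha}(x)$, obtain internal $\mathcal{SC}_d$-transitivity from Remark~\ref{rem_strong} by splicing chains through the identification $x_\alpha=x_\beta$, and obtain invariance from the successor rule $f(x_\eta)=x_{\eta+1}$ (with $\eta+1=\alpha$ handled by $x_\alpha=x_\beta\in S$). The paper states the lemma as a direct consequence of the cited results and Remark~\ref{rem_strong}, and your reconstruction matches that argument.
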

The proof of this lemma follows by the construction in Lemma~\ref{thm_existence}, and by using Remark~\ref{rem_strong}, it is analogous to the proof of \cite[Lemma 3.6]{della2025chains}. 
Indeed, by Lemma~\ref{thm_existence}, for every $x\in X$, there exists a least ordinal $\alpha>0$ such that $x_\alpha\in S_\alpha(x)$ and thus $x_\alpha=x_\beta$ for some $\beta<\alpha$. 
In the proof of \cite[Lemma 3.6]{della2025chains}, it is shown that the set $S_{\beta,\alpha}(x)$ is invariant, and by Remark~\ref{rem_strong}, $S_{\beta,\alpha}(x)$ is an internally $\mathcal{SC}_d$-transitive subset of $X$.

In \cite{della2025chains}, it is shown that a compact dynamical system has a chain-transitive subsystem. 
Using Lemma~\ref{lem_strongtrans} and Remark~\ref{rem_strong}, we want to prove that every compact dynamical system has, in particular, a $\mathcal{SC}_d$-transitive subsystem. 

\vspace{0.5cm}

We are now ready to prove Theorem \ref{th_rec_}, which we state here in abstract topological dynamical terms for consistency. 

\begin{theorem}\label{thmstrongsubsyst}
    Let $(X,f)$ be a compact dynamical system. Then $(X,f)$ has a $\mathcal{SC}_d$-transitive subsystem.
\end{theorem}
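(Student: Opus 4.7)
The plan is to reuse the transfinite construction underlying Lemma \ref{thm_existence} and Lemma \ref{lem_strongtrans}, not as a black box but so as to read off all three required features (invariance, $\mathcal{SC}_d$-transitivity, and being a genuine sub-dynamical-system) from the same transfinite sequence. Starting from an arbitrary $x \in X$ (we may assume its forward orbit is infinite, otherwise a periodic cycle already provides a $\mathcal{SC}_d$-transitive invariant subsystem), I would build $\{x_\gamma\}$ by setting $x_{\gamma+1} := f(x_\gamma)$ at successor stages and choosing $x_\gamma \in S^*_\gamma := \bigcap_{\beta < \gamma}(S_{\beta,\gamma}(x))'$ at limit stages, stopping at the least ordinal $\alpha$ such that $x_\alpha = x_\beta$ for some $\beta < \alpha$. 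Existence of such $\alpha$ is provided by Lemma \ref{thm_existence}, and I set $S := S_{\beta,\alpha}(x)$.

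For invariance, I would use that $f(x_\gamma) = x_{\gamma+1}$ at every stage (the successor-stage prescription being applied to the ordinal $\gamma+1$). For $\gamma \in [\beta, \alpha)$ one has either $\gamma + 1 < \alpha$, so $x_{\gamma+1} \in S$ directly, or $\gamma + 1 = \alpha$ (which can only occur when $\alpha$ is a successor), in which case the identification $x_\alpha = x_\beta$ again yields $f(x_\gamma) = x_\beta \in S$. Hence $f(S) \subseteq S$ and $(S, f|_S)$ is a bona fide subsystem of $(X, f)$. For the strong-chain-transitivity part, the crucial ingredient is Remark \ref{rem_strong}: for every $y, z \in S$ and every $\epsilon > 0$, the transfinite construction furnishes a strong $(\epsilon, d)$-chain from $y$ to $z$ whose intermediate points all lie in $S_{\beta,\alpha}(x) = S$. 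Thus $S$ is internally $\mathcal{SC}_d$-transitive, which by Definition \ref{def_strong_trans} is strictly stronger than the $\mathcal{SC}_d$-transitivity required by the statement.

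The main obstacle — already handled inside the proof of Lemma \ref{thm_existence}, and which I would simply invoke — is the nonemptiness of the limit-stage candidate sets $S^*_\gamma = \bigcap_{\beta < \gamma}(S_{\beta,\gamma})'$. Each $(S_{\beta,\gamma})'$ is a closed subset of the compact space $X$, and is nonempty because $|S_{\beta,\gamma}| \geq \aleph_0$ forces the derived set to contain an accumulation point; the nonemptiness of the intersection then follows from the finite intersection property in a compact space, applied along a cofinal increasing sequence in $\gamma$. Once this is in place, the rest is bookkeeping: the genuine novelty with respect to the chain-transitive subsystem result of \cite{della2025chains} lies precisely in Remark \ref{rem_strong}, namely the fact that the chains produced along the transfinite construction can be chosen to be \emph{strong} (in the sense of Definition \ref{strong_ch_}) rather than ordinary $\epsilon$-chains.
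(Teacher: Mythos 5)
Your construction stops exactly where the paper's Lemma~\ref{lem_strongtrans} stops, and that falls short of the theorem. The set $S=S_{\beta,\alpha}(x)$ you extract is indeed forward-invariant and internally $\mathcal{SC}_d$-transitive (your verification of those two points is fine and is precisely the content of Lemma~\ref{lem_strongtrans} together with Remark~\ref{rem_strong}), but a \emph{subsystem} here means a \emph{topologically closed} invariant set: closedness is what the main result advertises, what Theorem~\ref{th_rec_} asserts, and the first sentence of the paper's own proof of Theorem~\ref{thmstrongsubsyst}. Nothing in your argument delivers closedness: $S_{\beta,\alpha}(x)$ is a transfinite orbit segment and in general has accumulation points outside itself. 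Nor can you repair this by passing to $\overline{S}$, because $f$ carries no regularity whatsoever (the induced map $T_{U,D}$ is allowed to be everywhere discontinuous), so the closure of an invariant set need not be invariant, and internal $\mathcal{SC}_d$-transitivity need not survive closure either. Calling $(S,f|_S)$ a ``bona fide subsystem'' on the strength of invariance alone is where the gap sits.

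The missing idea is the entire second transfinite induction that constitutes the paper's proof: one assumes, towards a contradiction, that no closed, invariant, internally $\mathcal{SC}_d$-transitive set exists. Then every candidate set $S_{\beta_\gamma,\alpha_\gamma}(x^\gamma)$, and every union $M_{\xi,\lambda}$ of such sets, fails to be closed, so at each stage one can select a genuinely new accumulation point $x^\lambda$ lying outside everything built so far while propagating the strong-chain connections (properties (a$_\lambda$)--(c$_\lambda$)). Iterating transfinitely produces $|\lambda|$ pairwise distinct points of $X$ for every ordinal $\lambda$, and Hartogs' Lemma lets one choose $\lambda$ with $|\lambda|>|X|$, a contradiction; the stage at which the process is forced to halt is exactly where a closed such set appears. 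Without this step, or some substitute for it, your proposal proves Lemma~\ref{lem_strongtrans} but not Theorem~\ref{thmstrongsubsyst}.
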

\begin{proof}
We need to show that there is a closed, invariant, and internally $\mathcal{SC}_d$-transitive subset of $X$. 
Suppose, towards a contradiction, that $(X,f)$ does not have a $\mathcal{SC}_d$-transitive subsystem. 
In particular, we can assume that every orbit is infinite.
Let $x^0\in X$. 
By the last step of the proof of Lemma \ref{thm_existence}, there exists a least ordinal $\alpha_0>0$ such that $x^0_{\alpha_0}\in S_{\alpha_0}(x^0)$, and so $x^0_{\alpha_0}=x^0_{\beta_0}$ for some $\beta_0<\alpha_0$. 
By Lemma \ref{lem_strongtrans} and Remark \ref{rem_strong}, the set $S_{\beta_0,\alpha_0}(x^0)$ is an invariant, internally $\mathcal{SC}_d$-transitive subset. 
Since $S_{\beta_0,\alpha_0}(x^0)$ cannot be closed, there exists some $y\in (S_{\beta_0,\alpha_0}(x^0))'\setminus S_{\beta_0,\alpha_0}(x^0)$.
Let $x\in S_{\alpha_0}(x^0)$ and $\epsilon>0$, we want to show that there exists a strong $(\epsilon,d)$-chain from $x$ to $y$ whose points belong to $S_{\alpha_0}(x^0)\cup \{y\}$.
Let $\beta_0\le \eta<\alpha_0$ be such that $d(x_\eta,y)<\epsilon/2$. By property \eqref{eq_strong} and Remark \ref{rem_strong}, it follows that there exists a strong $(\epsilon/2,d)$-chain $x=y_0,y_1,\ldots,y_{n-1},y_n=x_\eta$ from $x$ to $x_\eta$ whose points belong to $S_{\alpha_0}(x^0)$.
By the triangle inequality,
\begin{equation}\label{eq_s1}
    \sum_{i=0}^{n-2}d(f(y_i),y_{i+1})+d(f(y_{n-1}),y)\le \sum_{i=0}^{n-1}d(f(y_i),y_{i+1})+d(x_\eta,y)<\frac \epsilon 2 +\frac \epsilon 2=\epsilon
\end{equation}
then, the points $x=y_0,y_1,\ldots,y_{n-1},y$ form a strong $(\epsilon,d)$-chain from $x$ to $y$ whose points belong to $S_{\alpha_0}(x^0)\cup\{y\}$.

If $y\neq x^0$, we set $x^1:=y$.
Otherwise, if $y=x^0$, it follows that $S_{\alpha_0}(x^0)$ is an invariant and an internally $\mathcal{SC}_d$-transitive subset. 
Since it cannot be a closed set, there is some $z\in (S_{\alpha_0}(x^0))'\setminus S_{\alpha_0}(x^0)$, and we set $x^1:=z \,(\neq x^0)$. 
With an analogous argument used in \eqref{eq_s1}, we find $S_{\alpha_0}(x^0)\, \mathcal{SC}_d\, x^1$, and for every $x\in S_{\alpha_0}(x^0)$ and for every $\epsilon>0$ there exists a strong $(\epsilon,d)$-chain from $x$ to $x^1$ whose points belong to $S_{\alpha_0}(x^0)\cup \{x^1\}$.

In summary, we have shown that the set $\{x^0,x^1\}$ has the following properties:
\begin{itemize}
    \item[(a$_1$)] there exists a least ordinal $\alpha_0$ such that $x^0_{\alpha_0}=x^0_{\beta_0}$ for some $\beta_0<\alpha_0$ and  $S_{\beta_0,\alpha_0}(x^0)$ is an invariant, internally $\mathcal{SC}_d$-transitive subset;
    \item[(b$_1$)] $S_{\alpha_0}(x^0)\, \mathcal{SC}_d\, x^1$, and for every $x\in S_{\alpha_0}(x^0)$ and for every $\epsilon>0$ there exists a strong $(\epsilon,d)$-chain from $x$ to $x^1$ whose points belong to $S_{\alpha_0}(x^0)\cup \{x^1\}$;
    \item[(c$_1$)] $x^0 \neq x^1$.
\end{itemize}

We now proceed by transfinite induction. 
Let $\lambda>1$ be an ordinal number. 
Assume that $x^\gamma$ has been defined for every $0\le \gamma<\lambda$ and that the set $\{x^\gamma\}_{\gamma<\lambda}$ has the following property:

\begin{enumerate}
\item[(a$_\lambda$)] for every $0\le \gamma<\lambda$, there exists a least ordinal $\alpha_\gamma$ such that $x^\gamma_{\alpha_\gamma}=x^\gamma_{\beta_\gamma}$ for some $\beta_\gamma<\alpha_\gamma$ and $S_{\beta_\gamma,\alpha_\gamma}(x^\gamma)$ is an invariant, internally $\mathcal{SC}_d$-transitive subset.
\end{enumerate}

Moreover, setting for every $\beta<\alpha\le \lambda$  
\[
M_{\beta,\alpha}:=\bigcup_{\beta\le \gamma<\alpha} S_{\alpha_\gamma}(x^\gamma),
\]
assume also the following properties:
\begin{enumerate}
\item[(b$_\lambda$)] For every $\eta<\xi<\lambda$, we have $S_{\alpha_\eta}(x^\eta)\,\mathcal{SC}_d\, x^{\xi}$, and for every $x\in S_{\alpha_\eta}(x^\eta)$ and every $\epsilon>0$ there exists a strong $(\epsilon,d)$-chain from $x$ to $x^\xi$ whose points belong to $M_{\eta,\xi}\cup \{x^\xi\}$;

\item[(c$_\lambda$)] $x^\eta \neq x^\xi$ whenever $\eta\neq \xi$ with $\eta,\xi<\lambda$ . 
\end{enumerate}

We say that a point $z$ verifies property P$_\lambda$ if, for every $\eta<\lambda$, we have $S_{\alpha_\eta}(x^\eta)\,\mathcal{SC}_d\, z$, and for every $x\in S_{\alpha_\eta}(x^\eta)$ and every $\epsilon>0$, there exists a strong $(\epsilon,d)$-chain from $x$ to $z$ whose points belong to $M_{\eta,\lambda}\cup \{z\}$.
Notice that, assuming (b$_\lambda$), to prove property (b$_{\lambda+1}$) it is sufficient to show that the point $x^\lambda$ verifies property P$_\lambda$.
We now proceed with the definition of the point $x^\lambda$.

\begin{itemize}
    \item[Case 1.] $\lambda$ is a successor ordinal.\\
    Consider the set $S_{\beta_{\lambda-1},\alpha_{\lambda-1}}(x^{\lambda-1})$, which is invariant and internally $\mathcal{SC}_d$-transitive by property (a$_\lambda$). 
    Since it cannot be a closed set, there exists some 
    $$
    y\in (S_{\beta_{\lambda-1},\alpha_{\lambda-1}}(x^{\lambda-1}))'\setminus S_{\beta_{\lambda-1},\alpha_{\lambda-1}}(x^{\lambda-1}).
    $$
    The point $y$ verifies property P$_\lambda$.
    Indeed, with an analogous argument used in \eqref{eq_s1}, we have that for every $x\in S_{\alpha_{\lambda-1}}(x^{\lambda-1})$ and for every $\epsilon>0$ there exists a strong $(\epsilon,d)$-chain from $x$ to $y$ whose points belong to $S_{\alpha_{\lambda-1}}(x^{\lambda-1})\cup \{y\}$. 
    Fix $\eta<\lambda$ and $\epsilon>0$ and take $x\in S_{\alpha_\eta}(x^\eta)$. 
    Let $x^{\lambda-1}=y_0,y_1,\ldots,y_{n-1},y_n=y$ ($n\in\mathbb{N}$) be a strong $(\epsilon/2,d)$-chain from $x^{\lambda-1}\in S_{\alpha_{\lambda-1}}(x^{\lambda-1})$ to $y$ whose points belong to $S_{\alpha_{\lambda-1}}(x^{\lambda-1})\cup \{y\}$. 
    By property (b$_\lambda$), let $x=z_0,z_1,\ldots,z_{m-1},z_m=x^{\lambda-1}$ ($m\in\mathbb{N}$) be a strong $(\epsilon/2,d)$-chain from $x$ to $x^{\lambda-1}$ whose points belong to $M_{\eta,\lambda-1}\cup \{x^{\lambda-1}\}$.
    Therefore, noting that $M_{\eta,\lambda}=M_{\eta,\lambda-1}\cup S_{\alpha_{\lambda-1}}(x^{\lambda-1})$, we have that 
    $x=z_0,z_1,\ldots,z_{m-1},x^{\lambda-1},y_1,\ldots,y_{n-1},y_n=y$
    is a strong $(\epsilon,d)$-chain from $x$ to $y$ whose points belong to $M_{\eta,\lambda}\cup \{y\}$.
    Indeed, since $z_m=x^{\lambda-1}=y_0$, we have
    
    \begin{equation}\label{eq_s2}
    \begin{aligned}
        \sum_{i=0}^{m-2}d(f(z_i),z_{i+1})+ d(f(z_{m-1}),x^{\lambda-1})+d(f(x^{\lambda-1}),y_1)+\sum_{i=0}^{n-1}d(f(y_i),y_{i+1})=\\ =\sum_{i=0}^{m-1}d(f(z_i),z_{i+1})+ \sum_{i=0}^{n-1}d(f(y_i),y_{i+1})< \frac \epsilon 2 + \frac \epsilon 2=\epsilon.
    \end{aligned}
    \end{equation}
    
    Consider the following two cases.
    \begin{itemize}
        \item If $y\notin \{x^\gamma\}_{\gamma<\lambda}$, then we set $x^\lambda:=y$. Then $x^\lambda$ verifies property P$_\lambda$ and (b$_{\lambda+1}$) is satisfied.
        
        \item Assume that $y=x^{\xi_0}$ for some $\xi_0<\lambda$.
        By property (b$_\lambda$) and since $y$ verifies property P$_\lambda$, we have that the set $M_{\xi_0,\lambda}$ is invariant and internally $\mathcal{SC}_d$-transitive. 
        Since this set cannot be closed, there is some $z_1\in (M_{\xi_0,\lambda})'\setminus M_{\xi_0,\lambda}$, so that $z_1\notin \{x^\gamma\}_{\xi_0\le \gamma<\lambda}$. 
        Since $M_{\xi_0,\lambda}$ is internally $\mathcal{SC}_d$-transitive and $z_1\in (M_{\xi_0,\lambda})'$, by the triangle inequality it follows that for every $x\in M_{\xi_0,\lambda}$ there exists a strong $(\epsilon,d)$-chain from $x$ to $z_1$ whose points belong to $M_{\xi_0,\lambda}\cup \{z_1\}$.
        By property (b$_\lambda$) and using the transitivity of $\mathcal{SC}_d$, we have that the point $z_1$ verifies property P$_\lambda$.
        
        If $z_1=x^{\xi_1}$ for some $\xi_1<\xi_0$, we repeat the same argument replacing $\xi_1$ with $\xi_0$ and observing that the set $M_{\xi_1,\lambda}$ is invariant and internally $\mathcal{SC}_d$-transitive.

        Since there is no infinitely decreasing sequence of ordinals, we can repeat the previous construction up to a certain  $k\in\mathbb{N}$, after which we must have $z_{k+1}\notin \{x^\gamma\}_{\gamma<\lambda}$.
        Then we set $x^\lambda:=z_{k+1}$ and since it verifies property P$_\lambda$,  property (b$_{\lambda+1}$) is satisfied.
    \end{itemize}
    Consider the sequence $\{x^\gamma\}_{\gamma\le \lambda}$. 
    By construction, property (c$_{\lambda+1}$) is verified, and from Lemma \ref{lem_strongtrans} and Remark \ref{rem_strong} property (a$_{\lambda+1})$ follows.
    
    \item[Case 2.] $\lambda$ is a limit ordinal.\\
    Since $|\{x^\xi \ |\ \gamma\le \xi <\lambda\}|\ge \aleph_0$ for every $\gamma<\lambda$, by compactness (see \cite[Th. 26.9, p. 169]{munkres2013topology}) we can pick 
    $$
    y\in \bigcap_{\gamma<\lambda}(\{x^\xi \ |\ \gamma\le \xi <\lambda\})'.
    $$
    The point $y$ verifies property P$_\lambda$.
    Indeed, fix $\eta<\lambda$ and $\epsilon>0$ and take $x\in S_{\alpha_\eta}(x^\eta)$. 
    Then there exists $\eta<\xi<\lambda$ such that $d(x^\xi,y)<\frac \epsilon 2$.
    By property (b$_\lambda$), there exists a strong $(\frac \epsilon 2,d)$-chain $x=y_0,y_1,\ldots,y_{n-1},y_{n}=x^\xi$ from $x$ to $x^\xi$ whose points belong to $M_{\eta,\xi}\cup \{x^\xi\}$. 
    By the triangle inequality, it follows that $x=y_0,y_1,\ldots,y_{n-1},y$ is a strong $(\epsilon,d)$-chain from $x$ to $y$ whose points belong to $M_{\eta,\xi}\cup \{y\}\subseteq M_{\eta,\lambda}\cup \{y\}$.\\
    Consider the following two cases.
    \begin{itemize}
        \item If $y\notin \{x^\gamma\}_{\gamma<\lambda}$, then we set $x^\lambda:=y$. 
        Then, $x^\lambda$ verifies property P$_\lambda$ and (b$_{\lambda+1}$) is satisfied.
        
        \item Assume that $y=x^{\xi_0}$ for some $\xi_0<\lambda$. 
        By property (b$_\lambda$) and since $y$ verifies property P$_\lambda$ we have that the set $M_{\xi_0,\lambda}$ is invariant and internally $\mathcal{SC}_d$-transitive. 
        Since this set cannot be closed, there exists $z_1$ such that 
        $z_1 \in \left(M_{\xi_0,\lambda}\right)'\setminus M_{\xi_0,\lambda}.$
        Therefore, we have that $z_1\notin \{x^\gamma\}_{\xi_0 \le \gamma<\lambda}$. 
        Since $M_{\xi_0,\lambda}$ is internally $\mathcal{SC}_d$-transitive and $z\in (M_{\xi_0,\lambda})'$, by property (b$_\lambda$) and by the triangle inequality it follows that for every $x\in M_{\xi_0,\lambda}$ there exists a strong $(\epsilon,d)$-chain from $x$ to $z_1$ whose points belong to $M_{\xi_0,\lambda}\cup \{z_1\}$.
        By property (b$_\lambda$) and using the transitivity of $\mathcal{SC}_d$, we have that
        the point $z_1$ verifies property P$_\lambda$.
        
        If $z_1=x^{\xi_1}$ for some $\xi_1<\xi_0$, we repeat the same argument replacing $\xi_1$ with $\xi_0$ and observing that the set $M_{\xi_1,\lambda}$ is invariant and internally $\mathcal{SC}_d$-transitive.

        Since there is no infinitely decreasing sequence of ordinals, we can repeat the previous construction up to a certain  $k\in\mathbb{N}$, after which we must have $z_{k+1}\notin \{x^\gamma\}_{\gamma<\lambda}$.
        Then we set $x^\lambda:=z_{k+1}$ and since it verifies property P$_\lambda$,  property (b$_{\lambda+1}$) is satisfied. 
    \end{itemize}
By construction, the sequence $\{x^\gamma\}_{\gamma\le \lambda}$ verifies property (c$_{\lambda+1}$), and from Lemma \ref{lem_strongtrans} and Remark \ref{rem_strong} property (a$_{\lambda+1})$ follows.
\end{itemize}
Assuming that $(X,f)$ does not have a $\mathcal{SC}_d$-transitive subsystem, we find that the application
$$
\gamma\mapsto x^\gamma \quad (0\le\gamma<\lambda)
$$
is a bijection between $\lambda$ and the set $\{x^\gamma\}_{0\le\gamma<\lambda}$, so we have that $|\{x^\gamma\}_{0\le\gamma<\lambda}|=|\lambda|$. 
By Hartogs' Lemma \cite{hartogs1915problem}, we can take $\lambda$ so large that $|\{x^\gamma\}_{0\le \gamma<\lambda}|>|X|$, which is a contradiction. 

Therefore, there exists an ordinal $\nu<\lambda$ such that it is impossible to define the point $x^\nu$.
More precisely, we have that, if $\nu$ is a successor ordinal, one cannot define the point $x^\nu$ if one of the following cases verifies:
\begin{itemize}
    \item The set $S_{\beta_{\nu-1},\alpha_{\nu-1}}(x^{\nu-1})$ is closed, so that
    $(S_{\beta_{\nu-1},\alpha_{\nu-1}}(x^{\nu-1}))'\subseteq S_{\beta_{\nu-1},\alpha_{\nu-1}}(x^{\nu-1}).$ 
    This means that $S_{\beta_{\nu-1},\alpha_{\nu-1}}(x^{\nu-1})$ is an $\mathcal{SC}_d$-transitive subsystem.
    
    \item The set $S_{\beta_{\nu-1},\alpha_{\nu-1}}(x^{\nu-1})$ is not closed and there exists an ordinal $\xi<\nu$ such that the set $M_{\xi,\nu}$ is a closed, invariant and internally $\mathcal{SC}_d$-transitive subset, that is a $\mathcal{SC}_d$-transitive subsystem.
    \end{itemize}

On the other hand, if $\nu$ is a limit ordinal, then we cannot define the point $x^\nu$ if for every $$y\in\bigcap_{\gamma<\nu}(\{x^\xi \ |\ \gamma\le \xi <\nu\})',$$ we have that $y\in\{x^\gamma\}_{\gamma<\nu}$. 
This implies that there exists an ordinal $\xi<\nu$ such that $M_{\xi,\nu}$ is a closed, invariant and internally $\mathcal{SC}_d$-transitive subset, that is a $\mathcal{SC}_d$-transitive subsystem. 
\end{proof}

\begin{remark}
    Note that the previous proof does not use the Axiom of Choice, as the operation of selecting a point from closed sets in a compact metric space can be performed by means of a selection function that exists in ZF+DC (\cite{Kura_}).
\end{remark}

\section{Conclusions}
Our work establishes a no-go for operational irreversibility along individual realized branches, under finite-dimensional, no-erasure conditions. Specifically, Theorems \ref{th_rec_} and \ref{th_rec_2} tell us the following fact (which was indicated in the Introduction as our main result): in every finite-dimensional, open quantum system, there is a topologically closed, invariant (under the realization-induced dynamics) subsystem $S$ where any two states are connected via pseudo-orbits with arbitrarily small Fubini-Study error and energetic cost. By Def. \ref{arrow}, this means that $S$ is operationally reversible: no operational arrow of time arises. 

It is perhaps useful to frame this result by analogy with the logical structure of some classical thermodynamics concepts. The second law provides a universal lower bound on irreversibility, while Carnot’s theorem is an existence result showing that this bound can in fact be saturated: it identifies ideal cycles that achieve reversibility under suitable structural constraints.
Landauer’s principle plays a similar role to the second law, including information in the entropy balance, furnishing a universal lower bound on dissipation for logically irreversible operations. Our result stands in a similar relation to Landauer’s principle as Carnot’s theorem does to the second law: it is an existence statement identifying conditions (compactness) under which the bound is asymptotically saturated. 

Let us finally remark that our argument crucially relies on the retention of the \textit{infinite} record of outcomes provided by the realization map. This infinite history is essential because, to steer the system toward a target state with arbitrarily small FS error $\epsilon>0$, the perturbations must be tailored to the specific sequence of realized collapse outcomes encountered along the chain. Although only finitely many corrections are required to achieve any fixed $\epsilon$-precision, the number of these corrections becomes unbounded, in general, as $\epsilon\to 0$, which means that, in principle, one needs the access to the entire future itinerary to prescribe the appropriate steering strategy at every scale. A similar observation is key in Bennett's resolution of Maxwell's demon paradox, where he showed that a demon's finite memory forces logically irreversible erasure to accommodate new data, thereby incurring an entropic cost that preserves the second law \cite{bennett1982thermodynamics}. Our no-erasure framework mirrors the idealized demon scenario, where an infinite memory sidesteps dissipation. Yet, as Bennett argued for any physical demon, real-world implementations face finite-size constraints that inevitably lead to erasure. Analogously, any practical realization of our procedure with a truncated record would introduce operational irreversibility, which recovers the second law.

\section*{Declarations}
The authors declare that they have no conflicts of interest.

The manuscript has no associated data.

\bibliographystyle{spmpsci}
\bibliography{quantum_bib}

\end{document}